\documentclass{comjnl}
\usepackage{algorithm}
\usepackage{algpseudocode}
\usepackage{amsmath,amssymb}
\usepackage{color}
\usepackage{subcaption}
\usepackage{amssymb}
\usepackage{amsmath}\usepackage[scr=rsfs]{mathalfa}
\usepackage{enumerate}
\usepackage{array}
\usepackage{changepage}
\usepackage{algorithm}
\usepackage{algpseudocode}
\usepackage{amsmath,amssymb}
\usepackage{color}
\usepackage{subcaption}
\usepackage{amssymb}
\usepackage{amsmath}\usepackage[scr=rsfs]{mathalfa}
\usepackage{enumerate}
\usepackage{amsthm}
\usepackage{adjustbox}
\usepackage{array}
\usepackage{changepage}
\usepackage{soul}
\usepackage{stfloats}

\begin{document}
\title{Gathering Autonomous Mobile Robots Under the Adversarial Defected View Model}
\author{Prakhar Shukla}
\affiliation{Indian Institute of Technology Jodhpur, India}

\author{Seshunadh Tanuj Peddinti}
\affiliation{Sardar VallabhBhai National Institute of Technology Surat, India}

\author{Subhash Bhagat}
\affiliation{Indian Institute of Technology Jodhpur, India}

\email{p23ma0011@iitj.ac.in$^1$, sbhagat@iitj.ac.in$^1$, i22ma045@amhd.svnit.ac.in$^2$}

\shortauthors{Shukla et al.}


\begin{abstract}
This paper studies the gathering problem for a set of $N \geq 2$ autonomous mobile robots operating in the Euclidean plane under the distributed {\it Look–Compute–Move} model. We consider oblivious robots executing under the adversarial {\it defected view} model, in which an activated robot may observe only a restricted subset of robots due to adversarial visibility faults. Consequently, the information obtained during each Look phase may be incomplete and dynamically altered. The objective is to guarantee deterministic finite-time gathering at a location not known a priori despite such sensing restrictions. We present two distributed algorithms under distinct scheduling assumptions. In the fully synchronous ($\mathcal{FSYNC}$) model, we prove finite-time gathering in the adversarial $(4,2)$ defected view setting, resolving a previously open case without requiring additional capabilities or coordinate agreement. In the asynchronous ($\mathcal{ASYNC}$) model, we establish finite-time gathering under the general adversarial $(N,K)$ defected view model, where an activated robot observes at most $K$ of the other $N-1$ robots for any $1 \leq K < N-1$. Both results hold under non-rigid motion. The proposed algorithm for the $\mathcal{ASYNC}$ model assumes agreement in the direction and orientation of one coordinate axis.
\end{abstract}
\keywords{Swarm robotics, Gathering Problem, Distributed algorithms, Autonomous and asynchronous robots, Oblivious robots, Defected View Model, One-axis agreement}
\maketitle

 \section{Introduction} 
Swarm robotics studies how a large number of simple, autonomous robots can collectively perform coordinated tasks without relying on centralized control or global supervision. Inspired by natural systems such as ant colonies, bird flocks, and fish schools, swarm robotic systems emphasize decentralization, scalability, robustness, and fault tolerance. Each robot in the swarm typically has limited sensing, computation, and communication capabilities; however, through local interactions, the swarm as a whole can exhibit complex and reliable collective behavior.

Due to these characteristics, swarm robotic systems are particularly suitable for deployment in environments that are inaccessible, hazardous, or unpredictable for humans. Representative application domains include operations in radioactive or contaminated zones, agricultural monitoring, security and surveillance, search-and-rescue missions, and exploration of unknown or unstructured environments. Moreover, the use of simple and identical robots makes such systems cost-effective and easily scalable, enabling large-scale deployment when required.

In swarm robotics, robots are often assumed to be \textbf{anonymous}, meaning they lack unique identifiers and cannot be distinguished by their appearance or behavior. They are typically modeled as \textbf{oblivious}, retaining no memory of past actions or observations, and operate solely based on their current perception of the environment. Robots do not communicate explicitly by exchanging messages; instead, coordination is achieved implicitly through observation of other robots’ positions. These assumptions significantly limit individual robot capabilities, making coordination problems both challenging and theoretically interesting.

A widely studied abstraction for robot behavior is the \textbf{Look--Compute--Move} cycle. In this model, an active robot first observes the positions of other robots (\textbf{Look}), then computes a destination based on the observed configuration (\textbf{Compute}), and finally moves toward the computed destination (\textbf{Move}). Robot activations and movements are often controlled by an adversarial scheduler. In the \textbf{asynchronous} ($\mathcal{ASYNC}$) model, activations occur independently, and the duration of each phase is finite but unpredictable. This model captures realistic timing uncertainties and is considered one of the most general and challenging execution models.

Another important aspect of robot coordination is the movement assumption. Under the \textbf{rigid} movement model, a robot always reaches its intended destination in a single move. In contrast, the \textbf{non-rigid} movement model allows an adversarial scheduler to interrupt a robot before it reaches its target, although the robot is guaranteed to make progress by moving at least a minimum distance $\delta > 0$. The non-rigid model is more realistic and significantly complicates Algorithm design and correctness proofs.

Within this framework, the \textit{gathering problem} is one of the most fundamental coordination problems in swarm robotics and distributed computing. Informally, gathering requires all robots in the system to move to a single common location that is not known in advance. Gathering serves as a basic primitive for more complex tasks such as pattern formation, leader election, and cooperative exploration. As a result, it has been extensively studied under various assumptions on visibility, synchrony, memory, communication, multiplicity, and coordinate agreement.

Despite this extensive body of work, many variants of the gathering problem remain challenging, particularly when robots operate under adversarial sensing conditions. Understanding the minimal assumptions under which gathering is solvable is crucial for both theoretical insights and practical system design. This motivates the study of gathering under increasingly weaker models, where robots must coordinate despite severe limitations on visibility, agreement, and motion.

In this paper, we study the gathering problem under this visibility-fault model and investigate the fundamental limits of coordination for autonomous robot swarms. Such visibility faults may arise due to factors such as memory failures, limited sensing capabilities, hardware malfunctions, or temporary environmental interference. 

\subsection{Related Works}

Many geometric coordination problems arise in swarm robotics. Over the last two decades, researchers have studied several fundamental formation problems, including \emph{gathering}, \emph{convergence}, \emph{circle formation}, \emph{pattern formation}, \emph{flocking}, and \emph{plane formation} \cite{agmon2006fault, Cieliebak2012, Cohen2004, flocci2014, Flocchini2008, Fujinaga2010, GervasiP04, YamauchiUKY17}. Among these, the \emph{gathering problem} is one of the most basic and widely studied problems in distributed mobile robotics.

The gathering problem has been explored under many different computational models and fault assumptions \cite{santorobook2}. A key Goal of this research is to identify the minimum capabilities required for robots to successfully gather, considering different levels of synchrony, visibility, memory, communication, and fault tolerance. Below, we summarize the main results according to the scheduling model.
\begin{itemize}
\item \textbf{Unlimited Visibility Model :}
Most existing studies assume that robots have unlimited visibility. We first review results under this assumption.
\begin{itemize}
\item \emph{2-Dimensional Euclidean Plane}

\textbf{FSYNC Model:}
In the fully synchronous (FSYNC) model, the gathering problem can be solved without any additional assumptions \cite{Santoro2012}. Agmon and Peleg proposed a fault-tolerant gathering Algorithm that tolerates up to $\frac{n}{3}-1$ Byzantine faults \cite{agmon2006fault}. Flocchini \emph{et al.} studied gathering and leader election for robots deployed on a closed curve in $\mathbb{R}^2$ and proposed several randomized Algorithms under different assumptions on memory, communication, visibility, orientation, and speed \cite{FlocchiniKKSY19}.

\textbf{SSYNC Model:}
For $n=2$, the gathering problem is known as the \emph{rendezvous problem}. Suzuki and Yamashita showed that rendezvous is impossible without any agreement on local coordinate systems, even with strong multiplicity detection \cite{Suzuki1999}. Later work showed that rendezvous is solvable in the semi-synchronous model when robots are equipped with lights using only two colors \cite{Viglietta13}.  
For $n>2$, Prencipe proved that gathering is impossible without multiplicity detection and without any agreement on local coordinate systems \cite{Prencipe2007}.

The gathering problem has also been studied under the \emph{crash fault} model. Agmon and Peleg proposed the first fault-tolerant gathering Algorithm that tolerates one crash fault with weak multiplicity detection \cite{agmon2006fault}. Bramas and Tixeuil later presented an Algorithm tolerating an arbitrary number of crash faults, assuming strong multiplicity detection \cite{Quentin2014}. Bhagat and Mukhopadhyaya proposed two fault-tolerant Algorithms assuming weak multiplicity detection \cite{BhagatM17}. Pattanayak \emph{et al.} extended these results by allowing guided trajectories, enabling tolerance of an arbitrary number of faults with weak multiplicity detection \cite{PattanayakMRM19}.  
Izumi \emph{et al.} proposed a constant-round randomized gathering Algorithm for oblivious robots with strong local multiplicity detection \cite{Izumi2013}. D\'efaGo \emph{et al.} established impossibility results and randomized solutions under various fault and scheduler models \cite{DefaGoPP20}.

\textbf{ASYNC Model:}
Cieliebak \emph{et al.} solved the gathering problem for $n>2$ asynchronous robots assuming weak multiplicity detection \cite{Cieliebak2012}. Bhagat \emph{et al.} proposed a fault-tolerant gathering Algorithm under obstructed visibility, assuming one-axis agreement and tolerating an arbitrary number of crash faults \cite{Bhagat201650}.  
Bhagat and Mukhopadhyaya studied gathering in environments with polygonal obstacles and provided solutions for non-rotationally symmetric initial configurations \cite{bhagat2017gathering}. Pattanayak \emph{et al.} proposed a randomized rendezvous Algorithm under the ASYNC$_{IC}$ model, assuming a known speed ratio between robots. Cicerone \emph{et al.} studied a variant of the gathering problem involving predefined meeting points and proposed several Algorithms, including one that optimizes a min--max distance criterion \cite{CiceroneSN18}.

The gathering problem has also been investigated for \emph{fat robots} (robots with physical extent). Czyzowicz \emph{et al.} solved the problem for $n=4$ fat robots \cite{czyzowicz2009gathering}, and Agathangelou \emph{et al.} extended the result to an arbitrary number of robots assuming common chirality \cite{Agathangelou2012}.

\item \emph{3-Dimensional Euclidean Space}

The gathering problem and the plane formation problem have also been studied in three-dimensional Euclidean space under different computational models. Bhagat \emph{et al.} studied gathering for opaque robots in $\mathbb{R}^3$ and proposed two Algorithms assuming one-axis agreement: one under the ASYNC model for fault-free robots and another under the SSYNC model tolerating an arbitrary number of crash faults \cite{BhagatCM18}. Several Algorithms have also been proposed for the plane formation problem under different models \cite{uehara2016plane, tomita2017plane, YamauchiUKY17}.
\end{itemize}
\item \textbf{Limited Visibility Model :}
Only a limited number of results exist for the gathering problem under restricted visibility.
\begin{itemize}
\item \emph{2-Dimensional Euclidean Plane}

Flocchini \emph{et al.} studied gathering for asynchronous robots with limited visibility, assuming agreement on both coordinate axes \cite{Flocchini2005}. Chatterjee \emph{et al.} proposed a solution for robots with non-uniform visibility ranges \cite{chatterjee2015gathering}. Later, S.~Chaudhuri \emph{et al.} proposed a finite-time distributed Algorithm assuming one-axis agreement and rigid movements \cite{8554579}.

 \item \emph{3-Dimensional Euclidean Space}

Braun \emph{et al.} studied gathering for synchronized robots in $\mathbb{R}^3$ with limited visibility and provided lower and upper bounds on the number of rounds required, assuming rigid movements \cite{braun2020local}. Di Luna \emph{et al.} proposed a distributed Algorithm in which robots gather on the boundary of a circle using angular visibility \cite{di2025gathering}. Yamauchi \emph{et al.} studied pattern formation for robots with limited visibility in three-dimensional Euclidean space \cite{yamauchi2013pattern}.
\end{itemize}
\item \textbf{Defected View Model:}
Unlike the \emph{limited} and \emph{unlimited visibility} models, the defected view model allows robots to omit the observation of certain robots, irrespective of their distance, due to adversarial defects in perception.
Kim \emph{et al.} were the first to introduce this model~\cite{kim2022gathering} and proposed a gathering algorithm for $N \geq 5$ fully synchronous robots under the adversarial distance-based $(N,K)$ defected view model. They also established impossibility results for the $(3,1)$ model and for a relaxed version of the adversarial $(N, K)$ model.
However, the solvability of gathering under the adversarial $(4,2)$ defected view model remained open~\cite{kim2023gathering}. Very recently, Aliberti \emph{et al.} revisited this problem for robots operating under the \textsc{SSYNC} scheduler and derived two new impossibility results~\cite{11318926}.

  \end{itemize}
  \begin{table*}
\centering
\renewcommand{\arraystretch}{1.3}
\small
\begin{adjustbox}{max width=\textwidth}
\begin{tabular}{|c|c|c|c|c|c|c|}
\hline
\textbf{Scheduler} & \textbf{Observation} & \textbf{Axis-agreement} & \textbf{Movements} & \textbf{N} & \textbf{K} & \textbf{Results for Gathering problem}\\
\hline
FSYNC & Adversarial  & No & \textit{Rigid} & $\geq 5$ & N-2 & Solvable in $\mathcal{O}(1)$ round.\cite{kim2023gathering}\\
\hline
 FSYNC &  Distance based & No & \textit{Rigid} & $\geq 4$  & N-2  &  Solvable in 
 $\mathcal{O}(1)$ round.\cite{kim2023gathering} \\
\hline
Any  & Relaxed & No & \textit{Rigid} & Any & $\leq \left\lfloor \dfrac{N-1}{2} \right\rfloor$ & Unsolvable \cite{kim2023gathering} \\
\hline
SSYNC & Any & No & \textit{Rigid} & Any & $\leq N-2$  & Unsolvable\cite{11318926}\\
\hline
SSYNC & Any & No &  \textit{Rigid} & 4 & 2 & Unsolvable even the distinct gathering \cite{11318926}\\
\hline
\textbf{FSYNC} & {\bf Adversarial} & {\bf NO} & {\bf Non-Rigid} & {\bf 4} & {\bf 2} & {\bf Solved in this paper}\\
\hline
\textbf{ASYNC} & {\bf Adversarial} & {\bf One-axis} & {\bf Non-Rigid} & $\geq 3$ & {\bf Any} & {\bf Solved in this paper}\\
\hline
\end{tabular}
\end{adjustbox}
\caption{Existing and new results for the \textit{gathering} problem under Defected view model.}
\label{tab:min_move_results}
\end{table*}

\subsection
{Motivation and Contributions}

Coordinating autonomous mobile robots is challenging when their sensing abilities are unreliable. In many realistic scenarios, robots may fail to observe all other robots due to obstacles, interference, or faults in perception, or memory failure. To model such situations, we consider the recently introduced \emph{defected view model} \cite{kim2023gathering}, where the set of robots visible to an active robot is chosen adversarially. This significantly increases the difficulty of coordination, as robots must make decisions based on incomplete and dynamically changing information. A natural and fundamental question is whether gathering can still be achieved under such a strong visibility defect using minimal robot capabilities.

 Our first contribution concerns a fundamental and previously unresolved case of the gathering problem under the adversarial $(4,2)$ defected view model posed in \cite{kim2023gathering}. Under the adversarial $(4,2)$ model, each active robot can observe only two of the other three robots. We analyze the difficulties arising from partial and changing observations and propose a distributed gathering algorithm that guarantees gathering in finite time. The proposed algorithm operates in a \textbf{fully synchronized} system and does not assume any additional capabilities such as memory, multiplicity detection, or agreement on coordinate systems. This result establishes the solvability of gathering for the smallest non-trivial instance of the defected view model.

 The other contribution of this paper addresses the general setting with $N$ robots under the adversarial $(N, K)$ defected view model, where the parameter $K$ may vary between $1$ and $N-2$ across different robots within the same execution, and may also vary for the same robot across different executions of an algorithm. We design a finite-time distributed algorithm for \textbf{asynchronous} ($\mathcal{ASYNC}$) robots that works for all $1 \le K \le N-2$. Thus, gathering remains possible even if a robot observes only one of the other robots at each activation. The algorithm assumes only an agreement on the direction and orientation of local coordinate axis of the robots.

Both algorithms operate under the non-rigid movement model, where robot motions may be adversarially interrupted but a minimum progress toward the computed destination is guaranteed. Together, these results demonstrate that deterministic gathering remains achievable despite adversarial visibility and motion constraints, thereby significantly extending existing results in distributed robot coordination.

\section{General Model and Problem Definition}

Let $\mathcal{R} = \{r_1, r_2, \ldots, r_N\}$ denote a set of $N \geq 2$ identical, anonymous, and oblivious robots. 
Robots are modeled as \textbf{punctiform} entities and are initially placed at arbitrary positions in the two-dimensional Euclidean plane $\mathbb{R}^2$. 
Each robot is endowed with basic motorial capabilities, allowing it to move freely in the plane.
\begin{itemize}
\item \textbf{Execution Model}

At any time, a robot can be either \textit{active} or \textit{inactive}. 
When active, a robot executes a standard \textbf{Look--Compute--Move} cycle. 
The system is controlled by an adversarial scheduler, which determines the activation times of robots and the duration of each phase.

Robot motion follows the \textbf{non-rigid movement model}. 
Specifically, after computing a destination point, a robot may be stopped by the scheduler before reaching its target. 
However, it is guaranteed that, in each activation cycle, the robot moves at least a minimum distance $\delta > 0$ toward the computed destination, unless the destination is reached earlier.

\item \textbf{Coordinate Systems and Visibility}

Robots do not share a global coordinate system. 
Instead, each robot $r_i$ is equipped with its own local coordinate system, whose origin coincides with its current position. 
The orientation and scaling of local coordinate systems may differ across robots, unless otherwise specified.

Robots have an \textbf{unlimited visibility range}, meaning that, in the absence of faults, a robot could potentially observe all other robots in the system. 
However, in this work, robot observations are subject to a \textbf{visibility defect}, which limits the set of robots that can be sensed during a Look phase.

\item \textbf{Configurations and Multiplicity}

Multiple robots may occupy the same point in the plane. 
Robots do not possess any multiplicity detection capability. 
Consequently, if several robots are located at the same position, an observing robot perceives that position as occupied by a single robot.

A \textbf{configuration} of the system at time $t$ is defined as the set of distinct points in $\mathbb{R}^2$ currently occupied by robots. 
We denote the configuration at time $t$ by
\[\mathbb{P}(t) = \{p_1(t), p_2(t), \ldots, p_{m}(t)\},\]
where $1 \leq m \leq N$, and each point $p_j(t) \in \mathbb{R}^2$ represents a location occupied by at least one robot. For notational convenience, we sometimes write $p_i$ instead of $p_i(t)$, and $p_i'$ instead of $p_i(t')$. This simplification does not introduce any ambiguity regarding the robot's position.
\\
\end{itemize}
\begin{itemize}
    \item \textbf{(N, K)-Defected View Model:} Consider a system of $N$ small robots. 
During the \emph{Look} phase of an active robot $r_i$, the robot does not observe the complete configuration $\mathbb{P}(t) \setminus \{r_i\}$. 
Instead, an adversary selects an arbitrary subset of at most $K$ robot positions from the set of positions occupied by the other $N-1$ robots, where
\[
1 \leq K \leq N-2.
\]
If the number of distinct robot positions (excluding $r_i$) is less than or equal to $K$, then $r_i$ observes all of them. 
Otherwise, the adversary selects a subset of exactly $K$ positions.

\end{itemize}
\begin{figure}[h]
    \centering
    \includegraphics[width=0.3\textwidth]{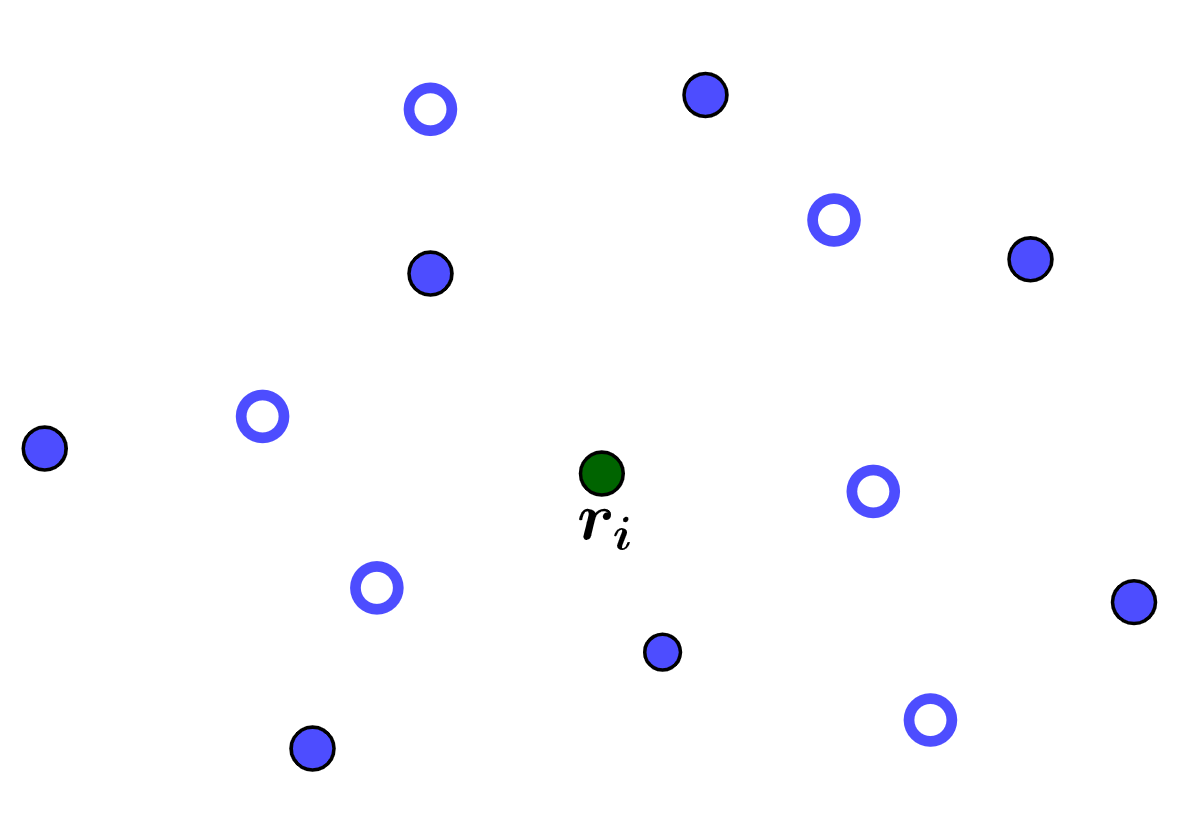}
    \caption{{\it Solid points represent the robots observed by the active robot $r_i$, and }(\textbf{o}) {\it denotes the missed robots by $r_i$.}}
    \label{fig: figure 1.1}
\end{figure}
The subset of observed positions may vary across the activation cycles, and no assumption is made about consistency between successive observations. This model captures severe adversarial sensing limitations, where an active robot has no control over which robots it observes and may perceive a different partial view of the system in each activation cycle.

\section{Adversarial(4,2)-Defected Model}

The adversarial $(4,2)$-defected view model was introduced in the seminal work of Kim \emph{et al.}~\cite{kim2022gathering}, where it was identified as an open problem. While the gathering problem under the \emph{distance-based} $(4,2)$-defected view has recently been solved~\cite{kim2023gathering}, the adversarial variant remained unresolved, even in the fully synchronous setting. In this section, we address this open problem by studying the gathering task for four robots operating under the adversarial $(4,2)$-defected model in a fully synchronous environment, and we propose a distributed algorithm to solve the problem.

Let $\mathcal{R} = \{r_1, r_2, r_3, r_4\}$ denote a set of four anonymous and identical robots deployed in the two-dimensional Euclidean plane. At any time $t$, the position of robot $r_i$ is denoted by $p_i(t)$, and the set of all robot positions at time $t$ is denoted by
\[
\mathbb{P}(t) = \{p_1(t), p_2(t), p_3(t), p_4(t)\}.
\]
The convex hull of $\mathbb{P}(t)$ is denoted by $CH(t)$.

We define the \emph{geometric span} of the configuration at time $t$ as the maximum Euclidean distance between any two points lying on the convex hull:
\[
\mathcal{G}(t) = \max \{\|a - b\| \mid a,b \in CH(t)\}.
\]
The quantity $\mathcal{G}(t)$ captures the overall spatial extent of the robot configuration and will be used as a progress measure in our analysis.

The robots operate in a \emph{fully synchronous} (FSYNC) model. That is, all robots execute the Look–Compute–Move cycle simultaneously and instantaneously in discrete rounds. Each robot is oblivious, has no persistent memory, and does not possess any unique identifiers.

According to the adversarial $(4,2)$-defected view model, during each observation phase, an active robot can observe the positions of at most two other robots among the remaining three. The choice of which robots are observed is controlled by an adversary and may vary arbitrarily across robots and across rounds. The observing robot has no information about which robots are hidden from its view.

For a robot $r_i$, let $\mathcal{O}_i(t)$ denote the set of positions observed by $r_i$ at time $t$. Since a robot always observes its own position, we have
\[
\mathcal{O}_i(t) = \{p_i(t)\} \cup \{p_j(t), p_k(t)\},
\]
where $j,k \in \{1,2,3,4\} \setminus \{i\}$ and $j \neq k$. For example, if robot $r_1$ observes robots $r_2$ and $r_3$, then
\[
\mathcal{O}_1(t) = \{p_1(t), p_2(t), p_3(t)\}.
\]

Robots do not have the ability to detect multiplicity. Consequently, if two or more robots occupy the same location, an observing robot perceives that location as being occupied by exactly one robot. Moreover, when two robots, say $r_1$ and $r_2$, are co-located, an observing robot may sense either $r_1$ or $r_2$, but never both simultaneously.

To illustrate the asymmetry induced by the adversarial observations, consider the following example. Suppose robot $r_3$ observes robots $r_1$ and $r_4$, resulting in
\[
\mathcal{O}_3(t) = \{p_1(t), p_3(t), p_4(t)\},
\]
while robot $r_4$ observes robots $r_2$ and $r_3$, yielding
\[
\mathcal{O}_4(t) = \{p_2(t), p_3(t), p_4(t)\}.
\]
In this scenario, each robot has a different partial view of the same global configuration, and no robot can infer the complete set $\mathbb{P}(t)$ from its local observation alone.

An example of such adversarial observation patterns is illustrated in Figure~\ref{fig: figure 1}.

   \begin{figure}[h]
    \centering
    \includegraphics[width=0.35\textwidth]{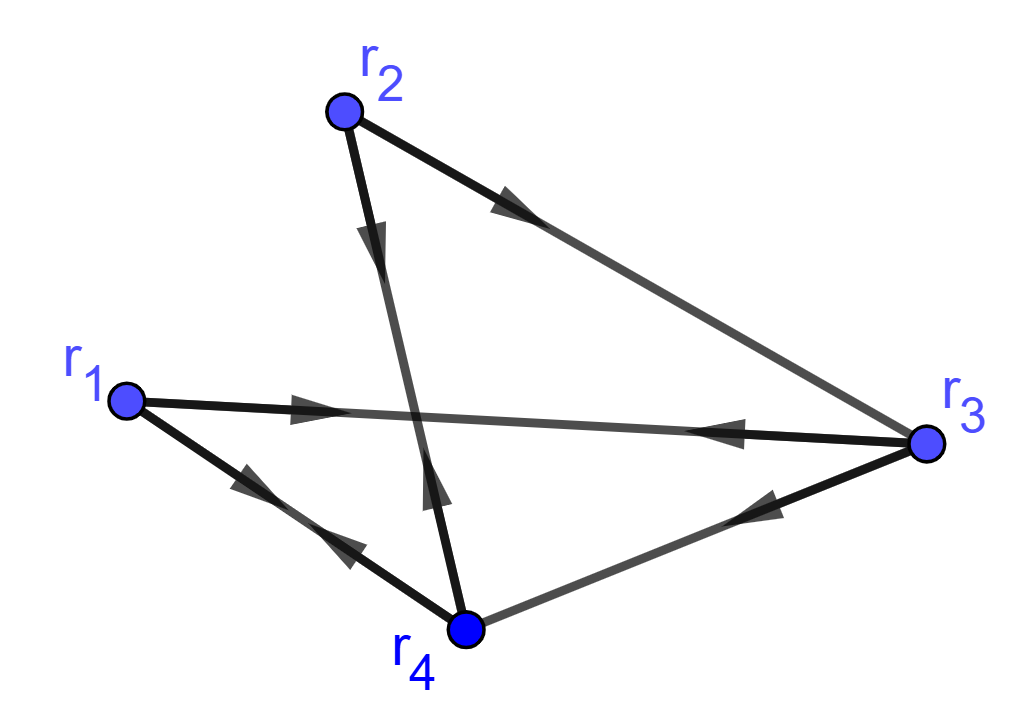}
    \caption{\emph{An edge between two positions does not imply mutual visibility. Arrows indicate the observation relationships among the robots.}}

    \label{fig: figure 1}
\end{figure}
  If robot $r_i$ observes collinear positions and determines that its own position lies strictly between the two visible positions, then $p_i$ is classified as an \emph{internal} position. Otherwise, $p_i$ is classified as an \emph{external} position.

Under this model, we propose a finite-time distributed Algorithm that solves the gathering problem without requiring any additional capabilities, prior agreements, or explicit coordination among the robots. The robots are assumed to move according to a \emph{non-rigid} motion model, meaning that an adversary may interrupt a robot’s movement before it reaches its computed destination, while still ensuring a minimum positive progress.

In order to compute the destination during the execution of the Algorithm, we now introduce several definitions and auxiliary functions that will be used by an active robot based solely on its local observation.
 \\
 \begin{itemize}
 \item {\bf The Vertex and Base of an isosceles triangle:} The {\bf vertex} of the isosceles triangle is defined as the intersection of two equal-length segments. The side, opposite to the vertex, is considered as the base of the triangle as shown in Figure~\ref{fig:  figure2}.
\begin{figure}[h]
    \centering
    \includegraphics[width=0.35\textwidth]{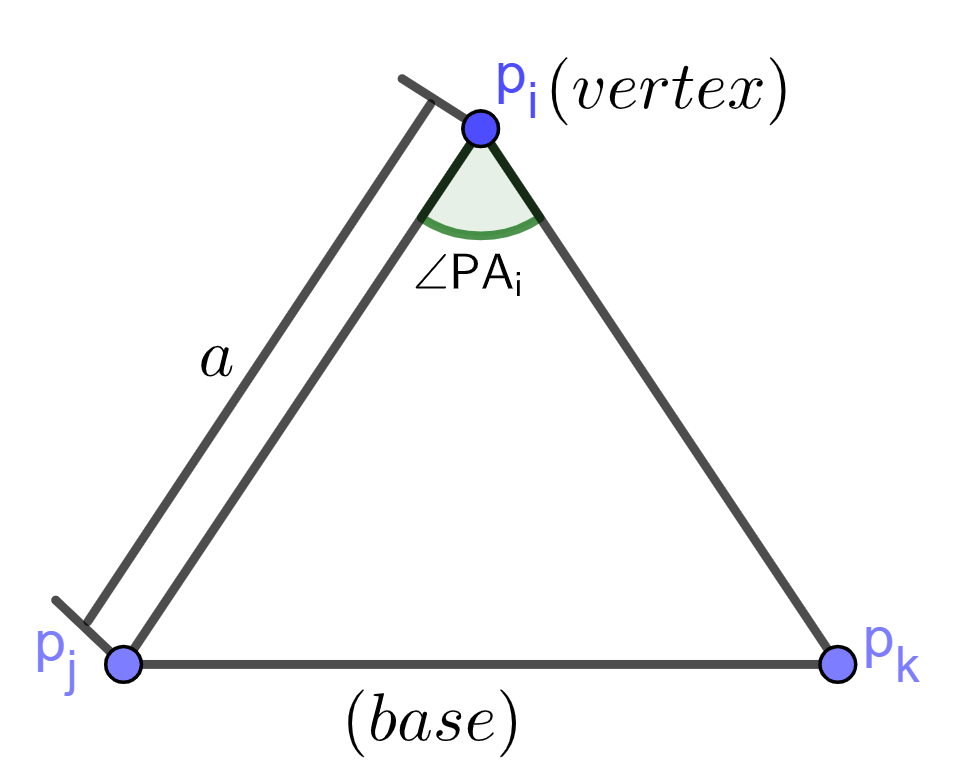}
    \caption{{\it An example of an isosceles triangle.}}
    \label{fig:  figure2}
\end{figure}
     \item \textbf{ComputeMidpoint():}  
This function takes as input the local observation set $\mathcal{O}_i(t)$ and the
current location of robot $r_i$ with respect to the observed positions, and
returns the midpoint of the line segment joining a selected pair of positions in
$\mathcal{O}_i(t)$.

\item \textbf{ComputeTriangle():}  
This function takes the observation set $\mathcal{O}_i(t)$ and robot $r_i$ as inputs, and constructs the triangle formed by the three positions in $\mathcal{O}_i(t)$.

\item \textbf{ComputeLongestLine():}  
Given the observation set $\mathcal{O}_i(t)$, this function returns the longest side of the triangle formed by all the positions in $\mathcal{O}_i(t)$.

\item \textbf{ComputeVertexAngle():}  
This function takes two line segments of equal length as input and returns the angle between them at their common endpoint (vertex angle). For robot $r_i$, this angle is denoted by $\angle PA_i$ as illustrated in Figure~\ref{fig:  figure2}.

 \end{itemize}
\section{Algorithm~\ref{alg:destination}}
\subsection{High-Level Idea}

\paragraph{High-Level Idea.}
In the adversarial $(4,2)$-defected view model, each robot may observe only a partial and adversarially chosen subset of robot positions, with no agreement on coordinates, no multiplicity detection, and non-rigid motion. Therefore, the algorithm is designed to rely solely on local geometric information computed from the observation set of each robot.

In each fully synchronous round, a robot analyzes the geometric configuration
induced by its observation set and classifies it into simple cases, such as a single point, a collinear configuration, or a non-collinear configuration forming a triangle. Based on this classification, the robot deterministically selects a destination defined by a specific geometric feature of the observed configuration, including midpoints, centroids, or midpoints of selected edges.

All computed destinations lie within the convex hull of the observed positions, and robots either move toward such destinations or remain stationary. As a result, robots never move away from the observed configuration, and the overall spatial span of the system does not increase. Symmetric configurations that could prevent progress are handled through carefully designed waiting rules, ensuring that symmetry is eventually broken even under adversarial defected observations.

Despite incomplete and adversarial views, the collective execution of these local
rules guarantees that robots repeatedly reduce their mutual separation whenever
possible. Once all robots reach a configuration in which they observe no other
robot at a distinct position, they correctly infer that gathering has been
achieved and terminate. Hence, Algorithm~\ref{alg:destination} ensures gathering
at a single, initially unknown location in finite time for fully synchronous,
oblivious robots in the adversarial $(4,2)$-defected model.

\subsection{Detailed Description of Algorithm~\ref{alg:destination}}
Depending on the number of observed robot positions and the geometric structure induced by these positions together with its current location, each active robot $r_i$ determines its destination in the Compute phase and moves toward it during
the Move phase. The algorithm classifies the observed configuration into a finite
set of simple geometric cases, such as a single point, a collinear configuration,
or a non-collinear configuration forming a triangle. For each case, a
deterministic destination is computed based solely on locally observable
geometric features. These movement rules ensure consistent progress toward
gathering while preserving correctness under partial and adversarial
observations.

 \begin{itemize}
     \item \textbf{Case-1} $\boldsymbol{|\mathcal{O}_i(t)|=1}$: Suppose robot $r_i$ does not observe any other robot during the \emph{Look} phase at time $t$. Under the adversarial defected-view model, $r_i$ infers that no robot is located at a position distinct from $p_i(t)$ and therefore considers $p_i(t)$ to be the desired gathering point. Consequently, $r_i$ ceases further execution.

       \begin{figure}[h]
    \centering
    \includegraphics[width=0.35\textwidth]{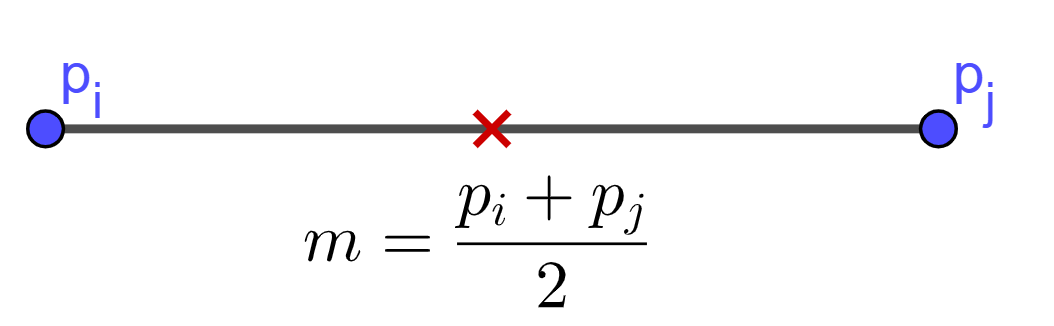}
    \caption{{\it (Illustration of Case-1). 
    Robots in both positions can compute $m$ as midpoint.}}
    \label{fig: figure 3}
\end{figure}
      \item \textbf{Case-2} $\boldsymbol{|\mathcal{O}_i(t)|=2}$: Suppose robot $r_i$ is located at point $p_i(t)$, and observes only one position, say $p_j(t)$ distinct from $p_i(t)$. Using the function \texttt{ComputeMidpoint()}, robot $r_i$ computes the midpoint $m$ of the line segment $\overline{p_i(t)p_j(t)}$ joining its current $p_i(t)$ and $p_j(t)$ the observed position. In the move phase of the computational cycle, $r_i$ moves towards the computed midpoint as illustrated in Figure~\ref{fig: figure 3}. 
    
     \item \textbf{Case-3} $\boldsymbol{|\mathcal{O}_i(t)|=3}$: Suppose robot $r_i$ is located at $p_i(t)$ and observes two distinct robot positions, say $p_j(t)$ and $p_k(t)$, both different from $p_i(t)$. Let $\mathcal{O}_i(t)$ denote the set consisting of these three positions. The points
in $\mathcal{O}_i(t)$ may be collinear or non-collinear. Our algorithm determines the movement of $r_i$ based on the geometric configuration induced by $\mathcal{O}_i(t)$. The corresponding sub-cases are described below.

     \begin{itemize}
         \item \textbf{Case-3.1:} Suppose that the set $\mathcal{O}_i(t)$ induces a {\bf collinear configuration}. First, consider the case in which robot $r_i$ is located at position $p_i(t)$ such that $p_i(t) \in \overline{p_j(t)p_k(t)}$. In this case, using the auxiliary function
\texttt{ComputeMidpoint()}, robot $r_i$ computes the midpoint of the line segment
$\overline{p_j(t)p_k(t)}$. The robot then sets this midpoint as its destination
and performs a non-rigid motion toward it.

         Now consider the case in which the robot position $p_i(t)$ is one of the endpoints
of the collinear configuration, that is, $p_i(t)$ is an extreme point. Without
loss of generality, suppose that $p_k(t)$ is the other extreme point. In this
case, using the function \texttt{ComputeMidpoint()}, robot $r_i$ computes the
midpoint of the line segment $\overline{p_i(t)p_k(t)}$ and moves toward it by
performing a non-rigid motion. An illustration of this case is shown in
Figure~\ref{fig: figure 4}.

         \begin{figure}[h]
    \centering
    \includegraphics[width=0.3\textwidth]{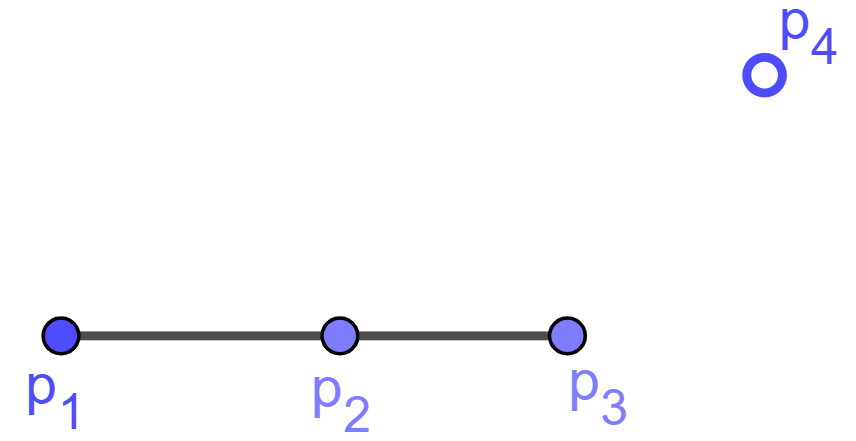}
    \caption{{\it (An illustration of case-3.1). Any robot located in positions $p_1$, $p_2$, and $p_3$ misses the robot at $p_4$ then $\mathcal{O}_i$ will be collinear for $1\leq i \leq 3.$}}
    \label{fig: figure 4}
\end{figure}
         
         \item \textbf{Case-3.2:} Suppose that the configuration induced by the set $\mathcal{O}_i(t)$ is
\textbf{non-collinear}. In this case, the convex hull of the three positions forms
a triangle. This triangle may satisfy certain special geometric properties. Robot
$r_i$ computes the triangle associated with $\mathcal{O}_i(t)$ using the auxiliary
function \texttt{ComputeTriangle()}, which also identifies the type of the
triangle. Based on these geometric properties, the algorithm enables the robot to
move correctly even in the presence of symmetry in the configuration, while
continuously reducing the span of the configuration. The description of the sub-cases is as follows:
 \begin{figure}[h]
    \centering
    \includegraphics[scale=0.7]{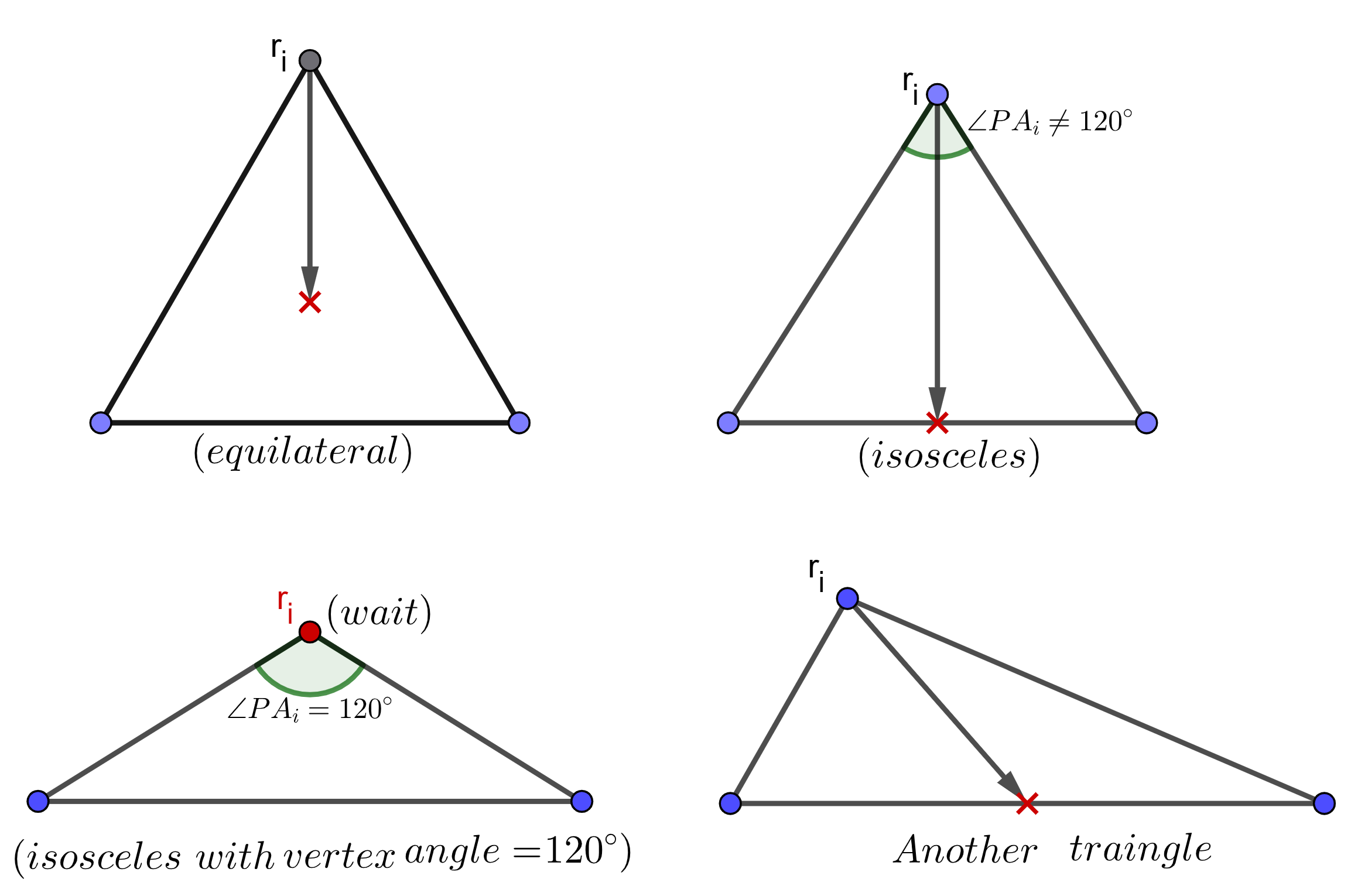}
    \caption{{\it (An illustration of case-3.2). If $\mathcal{O}_i(t)$ is non-collinear, every active robot $r_i$ associated with $\mathcal{O}_i(t)$ may encounter one of the following configuration types.}}
    \label{fig: figure 5}
\end{figure}
         \begin{itemize}
              \item \textbf{Case-3.2.1:} Suppose the set $\mathcal{O}_i(t)$ induces an \textbf{equilateral triangle}. In this
case, robot $r_i$ computes the centroid of the equilateral triangle as its next
destination and initiates a non-rigid motion toward it(see Figure~\ref{fig: figure 5}).

         \item \textbf{Case-3.2.2:} Now consider the case in which the set $\mathcal{O}_i(t)$ forms an \textbf{isosceles triangle}. Suppose that robot $r_i$ is located at the vertex of the triangle, that is, $p_i(t)$ corresponds to the vertex point. In this case, robot $r_i$ computes the vertex angle $\angle PA_i$ (as illustrated in
Figure~\ref{fig: figure2}) using the auxiliary function
\texttt{ComputeVertexAngle()}.

If the vertex angle satisfies $\angle PA_i = 120^\circ$, then robot $r_i$ waits.
This waiting strategy enables the algorithm to break the symmetry induced by the
adversarial defected observation when the global configuration of robots matches
the configuration shown in Figure~\ref{fig: figure 10}. If
$\angle PA_i \neq 120^\circ$, then robot $r_i$ computes the midpoint of the base of
the isosceles triangle, sets it as its destination, and performs a non-rigid
motion toward the computed midpoint during the Move phase.

         If robot $r_i$ is not located at the vertex of the isosceles triangle, that is,
$p_i(t)$ lies at one of the base vertices, then robot $r_i$ computes the midpoint
of the base of the triangle and performs a non-rigid motion toward it.

        \item \textbf{Case-3.2.3:} Suppose the set $\mathcal{O}_i(t)$ induces a triangle that is neither equilateral nor isosceles. In this case, robot $r_i$, located at $p_i(t)$, first computes the longest side of the triangle using the auxiliary function
\texttt{ComputeLongestLine()}. Let $\mathcal{L}_i$ be the longest line. It then computes the midpoint of this longest side and performs a non-rigid motion toward the computed midpoint.

     \end{itemize}
         \end{itemize}
     \end{itemize}
    Each active robot executes Algorithm~\ref{alg:destination} independently, and the
collective execution guarantees that all robots gather at a single,
non-predefined location in finite time.




\begin{algorithm}[!htb]
\caption{\texttt{ComputeDestination($r_i$)}}
\label{alg:destination}
\begin{algorithmic}[1]

\Require Observation set $\mathcal{O}_i$, current position $p_i$
\Ensure Destination point $d_i$

\If{$|\mathcal{O}_i| = 1$}
    \State $d_i \gets p_i$

\ElsIf{$|\mathcal{O}_i| = 2$}
    \State Let $p_j$ be the other point in $\mathcal{O}_i$
    \State $d_i \gets \texttt{ComputeMidpoint}(p_i, p_j)$

\ElsIf{$|\mathcal{O}_i| = 3$}
    \State Let $p_j, p_k$ be the other two points in $\mathcal{O}_i$
    
    \If{$\{p_i, p_j, p_k\}$ is collinear}
        \State Let $q_{\min}, q_{\max}$ be the extremal points of $\{p_i, p_j, p_k\}$
        \If{$p_i \in \{q_{\min}, q_{\max}\}$}
            \State Let $p_{\text{far}}$ be the point farthest from $p_i$ in $\{p_j, p_k\}$
            \State $d_i \gets \texttt{ComputeMidpoint}(p_i, p_{\text{far}})$
        \Else
            \State $d_i \gets \texttt{ComputeMidpoint}(q_{\min}, q_{\max})$
        \EndIf

    \Else
        \If{$\{p_i, p_j, p_k\}$ forms an equilateral triangle}
            \State $d_i \gets \texttt{ComputeCentroid}(p_i, p_j, p_k)$

        \ElsIf{$\{p_i, p_j, p_k\}$ forms an isosceles triangle}
            \State Let $v$ be the vertex of the isosceles triangle
            \State Let $\{b_1, b_2\}$ be the endpoints of the base opposite $v$
            \If{$p_i = v$ and $\angle b_1 v b_2 = 120^\circ$}
                \State $d_i \gets p_i$
            \Else
                \State $d_i \gets \texttt{ComputeMidpoint}(b_1, b_2)$
            \EndIf

        \Else
            \State Let $\{s_1, s_2\}$ be the endpoints of the longest side of $\triangle p_i p_j p_k$
            \State $d_i \gets \texttt{ComputeMidpoint}(s_1, s_2)$
        \EndIf
    \EndIf

\Else
    \State $d_i \gets p_i$
\EndIf

\State \Return $d_i$

\end{algorithmic}
\end{algorithm}

   \subsection{Correctness}
In this part of the paper, we prove the main theorem establishing the correctness of Algorithm~\ref{alg:destination}. Specifically, we show that, under the adversarial $(4,2)$ defected view model, the proposed Algorithm guarantees that all robots gather at a single point within finite time. The proof demonstrates that gathering is achieved by synchronous robots despite defected views and non-rigid motion.

Our analysis proceeds by identifying key invariants preserved throughout the execution and by showing monotonic progress toward a common location. We further establish that the Algorithm does not rely on any additional capabilities, such as global knowledge, explicit communication, multiplicity detection, or extra agreement assumptions beyond those of the model. Hence, Algorithm~\ref{alg:destination} provides a correct and self-contained solution to the gathering problem under the given adversarial setting.

     \begin{lemma} {\it If robots are collinear, then at least two robots compute the same midpoint and all the collinear robots gather at a single point in a finite number of rounds.}\label{1.1}
     \begin{proof} Suppose there are four collinear positions, say $p_1, p_2, p_3$ and $p_4$. Let $p_1$ and $p_4$ be located at the corner positions as shown in Figure~\ref{fig: figure 6}. Under the (4,2)-defected model, each robot adversarially chooses two positions other than its own. According to their observation,  
     
     \textbf{Case 1.} If all the robots find themselves at an external position, then definitely $\mathcal{O}_1= \{p_1,p_2,p_3\}$ or $ \{p_1, p_2,p_4\} $ or $ \{p_1,p_3,p_4\}$, and similarly, $\mathcal{O}_4= \{p_1,p_2,p_4\}$ or $ \{p_1, p_3,p_4\} $ or $ \{p_2,p_3,p_4\}$, but the set $\mathcal{O}_2= \{p_2,p_3,p_4\}$, and $\mathcal{O}_3= \{p_1,p_2,p_3\}$ in this scenario, $r_1$ either computes midpoint of the segment $\overline{p_1 p_3}$ or midpoint of $\overline{p_1 p_4}$. $r_4$ either computes midpoint of segment $\overline{p_1 p_4}$ or midpoint of $\overline{p_2 p_4}$. Robots $r_2$ and $r_3$  computes midpoint of the segments $\overline{p_2 p_4}$ and $\overline{p_1 p_3}$ respectively. Thus, if $r_1$ and $r_4$ compute the same midpoint, the proof is complete. If both compute different midpoints, then either $r_1$ or $r_4$ computes the midpoint that is the same as $r_2$ or $r_3$.
         \begin{figure}[h]
    \centering
    \includegraphics[width=0.3\textwidth]{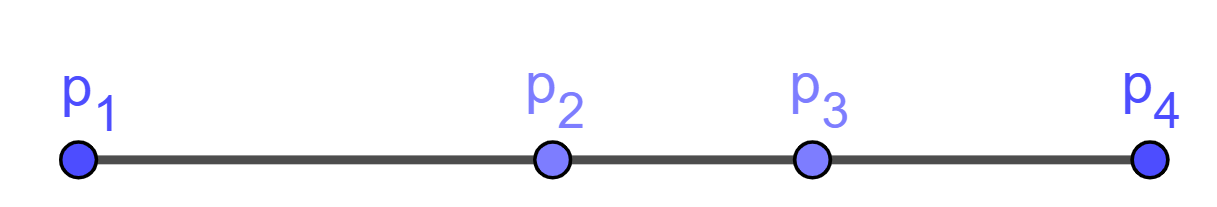}
    \caption{$\mathbb{P}(t)$ is collinear.}
    \label{fig: figure 6}
\end{figure}
     
     \textbf{Case 2.} If only one non-corner robot say $r_3$ finds itself at an {\it internal} position at $p_3$, then $r_3$ computes either midpoint of $\overline{p_1 p_4}$ or midpoint of $\overline{p_2 p_4}$, and $r_1$, $r_2$, $r_4$ compute midpoint same as Case 1. Again, for all possible choices of midpoints, we can conclude that at least two robots compute the same midpoint.\\

     \textbf{Case 3.}  If two robots find themselves at an {\it internal} position, then $r_2$ computes either the midpoint of $\overline{p_1 p_3}$ or the midpoint of $\overline{p_1 p_4}$, $r_3$ computes the midpoint as in Case 2, and $r_1$ and $r_4$ compute as in Case 1. From all possible computed midpoints, we can conclude that at least two robots compute the same midpoint.\\

     As no robot moves outside the segment $\overline{p_1 p_4}$ and robots $r_1$, $r_4$ always move inside, so the length of the segment monotonically decreases and becomes less than $2\delta$. In this situation, from the above cases, we can say that every robot computes the midpoint accordingly and reaches it in a round. Now, there remain only 3 positions. In the next round, suppose $p_1'$, $p_2'$, and $p_3'$ are three collinear positions, then according to the proposed Algorithm, all the robots located at these positions compute the same midpoint and reach it. In this way, the collinear robots gather at a single location in a finite number of rounds.
     \end{proof}
     \end{lemma}
    
 \begin{lemma} {\it A convex hull of $4$ distinct robot positions that does not contain any isosceles or equilateral triangle, then such a configuration has at most 3 longest lines. } \label{1.2}
 \begin{proof} Let $p_1,p_2,p_3$, and $p_4$ be four robot positions in the convex hull $CH()$, and from these positions, a total of 6 segments can be drawn as in Figure~\ref{fig: figure 7}. Every observed set has at most $3$ distinct elements. According to the observed set, $4_{C_3}$ distinct triangles can be formed. Let these triangles be $\Delta_{1}=\Delta_{123}$, $\Delta_{2}=\Delta_{124}$, $\Delta_{3}=\Delta_{134}$, $\Delta_{4}=\Delta_{234}$.
  \begin{figure}[h]
    \centering
    \includegraphics[scale=0.8]{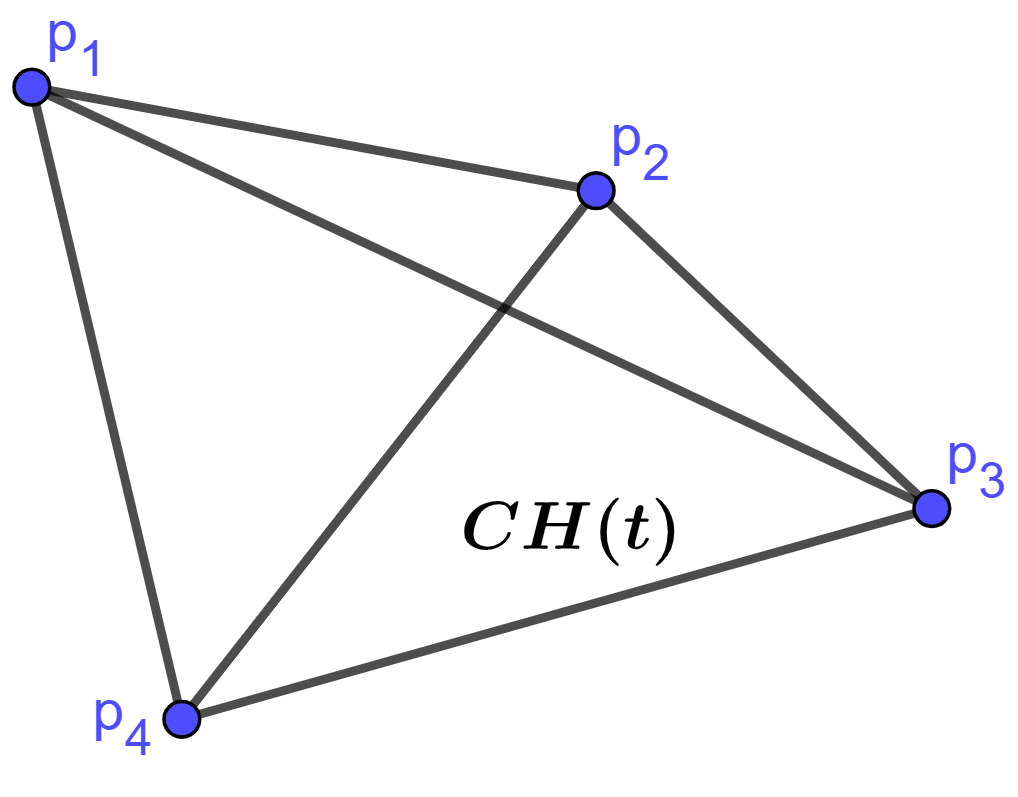}
    \caption{{\it (An illustration of Lemma~\ref{1.2}) The convex hull $CH(t)$ contains neither equilateral nor isosceles triangle.}}
    \label{fig: figure 7}
\end{figure}
 
 For contradiction, suppose there are 4 longest lines, i.e., each triangle has a unique longest line. WLOG, suppose $\mathcal{L}_1=\overline{p_1p_3}$, $\mathcal{L}_2=\overline{p_1p_4}$, $\mathcal{L}_3=\overline{p_3p_4}$, $\mathcal{L}_4=\overline{p_2p_4}$ are longest lines. $\mathcal{L}
 _{k}$ is the longest line of the triangle $\Delta _k$. $\mathcal{L}=\{\mathcal{L}_1, \mathcal{L}_2, \mathcal{L}_3, \mathcal{L}_4\}$ be the set of longest lines.  By a key combinatorial fact, every edge appears in exactly two of the four triangles. If $\mathcal{L}_k$ is the longest line of the triangle $\Delta_k$. Let $\Delta_{k'}$ be any other triangle containing edge $\mathcal{L}_k$, then $\exists \mathcal{L}_{k'}$ such that $|\mathcal{L}_{k'}|>|\mathcal{L}_{k}|$. From all the triangles, we can conclude that, 
 \begin{equation}\label{e-1.1}
     |\mathcal{L}_{1}|<|\mathcal{L}_{2}|<|\mathcal{L}_{3}|<|\mathcal{L}_{4}|<|\mathcal{L}_1|
 \end{equation}

     equation~\ref{e-1.1} implies,  $|\mathcal{L}_{1}|<|\mathcal{L}_{1}|$ which is impossible. So our assumption is wrong; there must be at most 3 longest lines of distinct size.
     \end{proof}
     \end{lemma}
     
     \begin{lemma} {\it If all the robots are single and located at the vertex of the convex hull, then at least two robots compute the midpoint on the longest diagonal line.}\label{1.3}
\begin{proof} Let all 4 robots be located at distinct positions, and their observed set does not form an isosceles or equilateral triangle, then by the pigeonhole principle, every edge appears in exactly two of the four triangles, and there must be a longest diagonal line in two triangles so at least two robots compute the midpoint on the longest diagonal line. 
\end{proof}
\end{lemma}
 
\begin{lemma} {\it If all the robots are single and located on the convex hull, then at least two robots compute the midpoint on the same longest line.}\label{1.4}
\begin{proof} By a similar argument to Lemma~\ref{1.3}, we can prove this Lemma. 
\end{proof}
\end{lemma}

 \begin{lemma} {\it The geometric span $\mathcal{G}(t)$ of all the robot positions monotonically decreases,i.e., for any $t'>t$, the span $\mathcal{G}(t)\leq \mathcal{G}(t')$.} \label{1.5}
\begin{proof} In Lemma~\ref{1.1}, we have seen that the length of the line monotonically decreases. Now, suppose robots are not collinear, then according to the proposed Algorithm, robots either move towards the midpoint on the edges of the convex hull or move inside the hull. Let $CH(t)$ be the convex hull of all four positions at time $t$. Let $\overline{p_1(t)p_2(t)}$ be an edge of the convex hull. First, suppose that the robot at $p_1(t)$ moves inside the convex hull. Let it stop at some position, say $p_1(t')$, which is inside $CH(t)$. Similarly, there is another vertex, say $p_3(t)$, then \begin{equation}\label{e-1}
     ||p_1(t')-p_3(t)||\leq || p_1(t)-p_3(t)||
\end{equation}

 \begin{figure}[h]
    \centering
    \includegraphics[scale=0.8]{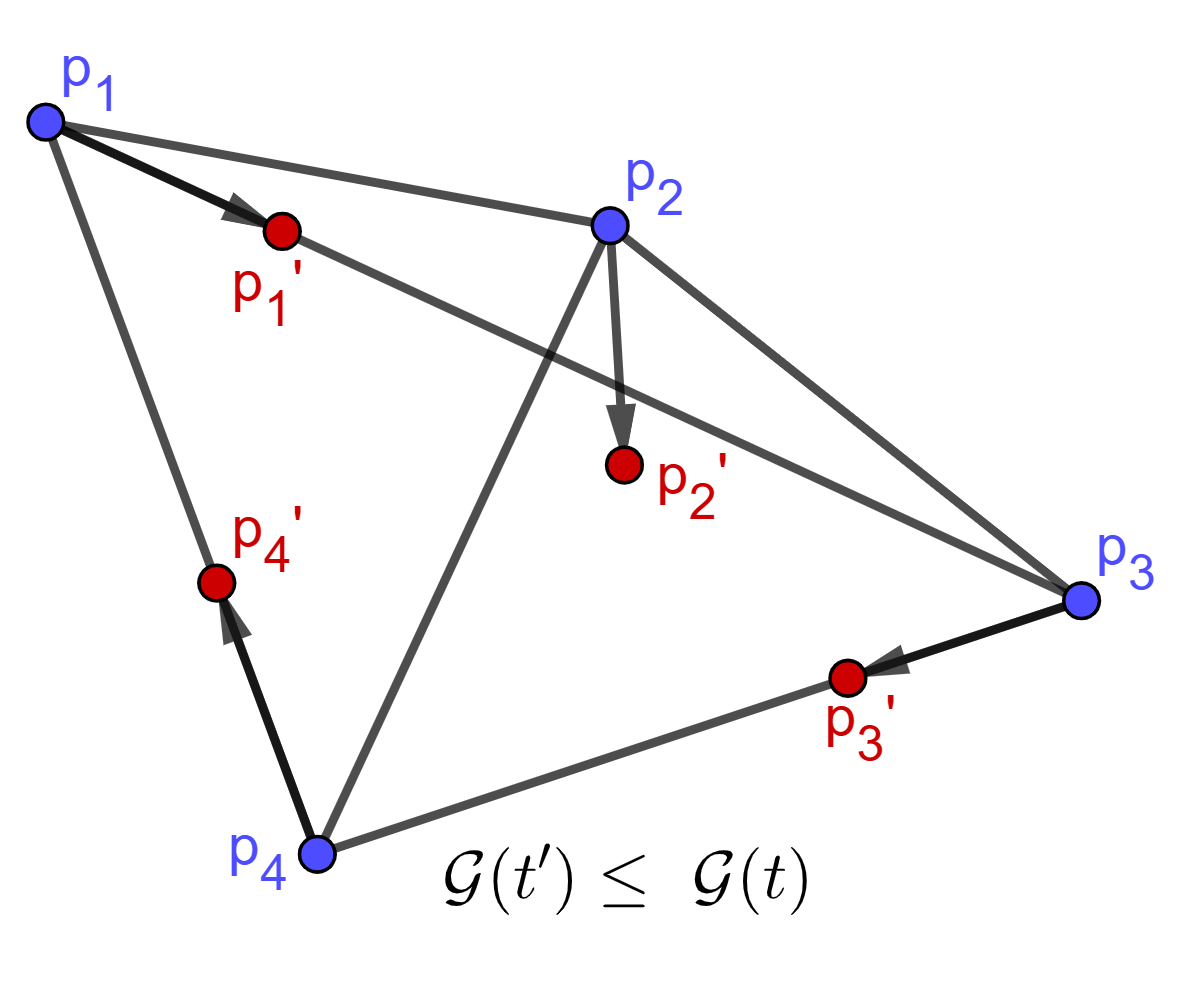}
    \caption{{\it An example showing the non-rigid movement of robots towards the respective destination points.}}
    \label{fig: figure 8}
\end{figure}
In another scenario, when the robot moves along the outer edge of the convex hull, then if $p_1(t')$ lies on the edge of $CH(t)$ between the vertices $p_1(t)$ and $p_2(t)$, then by convexity, we have that $p_1(t') = \lambda p_1(t) + (1-\lambda) p_2(t)$, for $0<\lambda <1$, Then for any vertex $p_3(t)$

{\footnotesize
\begin{multline}\label{e-2}
     ||p_1(t')-p_3(t)||\leq (1-\lambda||p_1(t)-p_3(t')||+\lambda ||p_2(t)-p_3(t')||\\< max(||p_1(t)- p_3(t')||,||p_2(t)- p_3(t')||) 
\end{multline}
}

 From the equations~(\ref{e-1}), (\ref{e-2}), we can conclude that if any of the two robot positions is involved in the pair defining geometric span, then after the motion, we have $\mathcal{G}(t')< \mathcal{G}(t)$.
 As no robot in the hull waits indefinitely, and there exists at least two positions defining the geometrical span, we can conclude that the geometrical span of the robot positions is monotonically decreasing, as in Figure~\ref{fig: figure 8}. 
 \end{proof}
 \end{lemma}

\begin{lemma} {\it  Assume that all the robots are single. If the distance between every robot position is less than or equal to $\ delta$, then gathering can be achieved in at most three rounds.} \label{1.6}
\begin{proof}  According to Lemma~\ref{1.5}, any configuration of robot positions monotonically decreases and converges to a configuration such that $max||x-y||\leq \delta$, for all $x,y \in CH$. This means that every robot reaches the computed destination in the same cycle. Now, we will discuss all possible scenarios.
\textbf{If there are three longest lines:} From the figure, if all the robots are at positions $p_1$, $p_2$, $p_3$, $p_4$, compute the same longest line, then they gather at the midpoint of the longest line. In the second scenario, suppose $r_1$ and $r_2$ observe the triangle $\Delta_{124}$, and suppose $r_3$ and $r_4$ observe the triangle $\Delta_{134}$, then they reach the computed midpoints on the segments $\overline{p_2p_4}$ and $\overline{p_1p_3}$ respectively. In the next round, there will be only two positions, and robots in each position compute the midpoint of the segment and gather at a single position. Now, suppose every robot has a different view under an adversarial $(4,2)$-defected view model, then according to Lemma~\ref{1.3}, at least two robots compute the midpoint for the same longest line. WLOG, suppose $r_1$ and $r_2$ compute the midpoint for the same line segment, say $\overline{p_2p_4}$, and $r_3$ and $r_4$ compute the midpoints for lines say  $\overline{p_1p_3}$ and $\overline{p_3p_4}$ and moves to it, in next round all the robots observes the same triangle and compute the same destination and gathering will be achieved. In all these scenarios, gathering will be achieved in at most 2 rounds. 

\textbf{If there are four longest lines:} According to Lemma~\ref{1.2}, there must be either an isosceles or equilateral triangle. Now, according to Lemma 16 discussed in \cite{kim2023gathering}, that type of configuration can be gathered in two rounds.   

\textbf{If one robot is located inside an isosceles or an equilateral triangle:} First, suppose position $p_4$ is located inside an isosceles triangle $\Delta p_1p_2p_3$(with vertex angle $\angle PA_1\neq 120^\circ$) from Figure~\ref{fig: figure 9}, then if $r_1$ observes $p_2, p_3$, then moves to the midpoint of the segment $\overline{p_2p_3}$(base). Suppose $r_2$(resp $r_3$) observes $p_1, p_4$, then moves to the midpoint of $\overline{p_1p_2}$(resp.$\overline{p_1p_3}$), and when $r_4$ moves, it accompanies one of them, and in the next round, gathering can be achieved.
 \begin{figure}[h]
    \centering
    \includegraphics[width=0.45\textwidth]{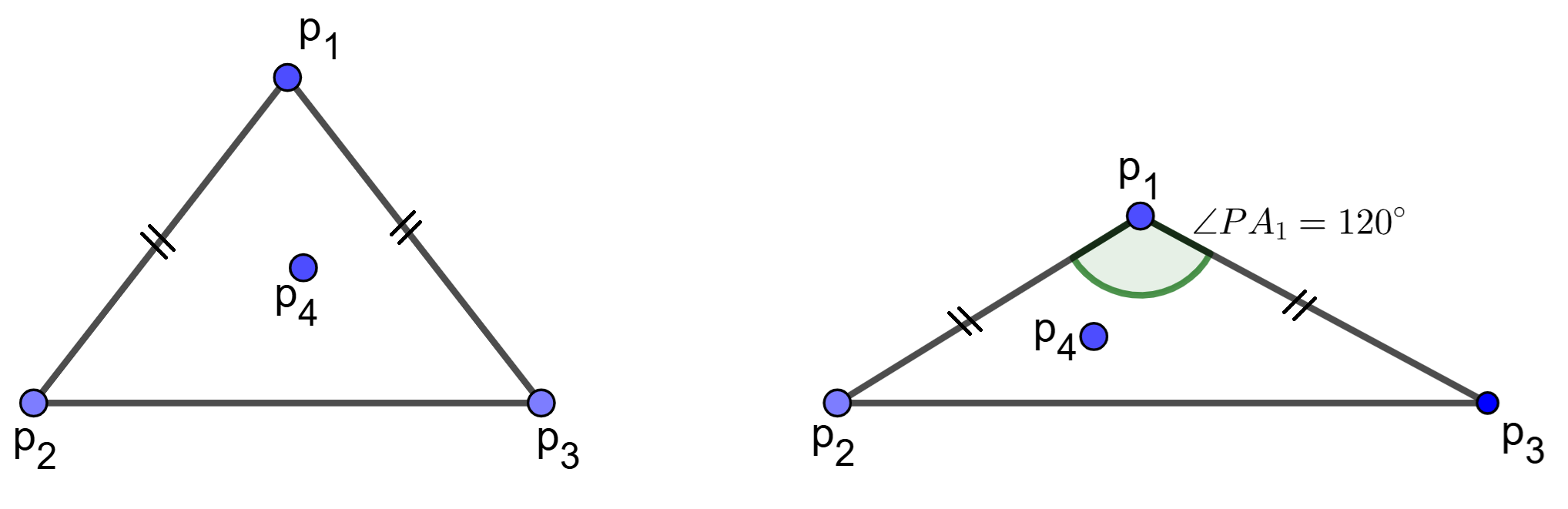}
    \caption{{\it Left one is a normal isosceles, and the triangle at the right is an isosceles triangle with vertex angle of $120^\circ$.}}
    \label{fig: figure 9}
\end{figure}
Now suppose position $p_4$ is located inside an isosceles triangle $\Delta p_1p_2p_3$(with vertex angle $\angle PA_1= 120^\circ$) so $r_1$ can not move, then if $r_4$ observes the same as either $r_2 $ or $r_3$ there will be an accompanied position and in next round all the robots looks the isosceles triangle and move to the midpoint of the base except $r_1$, and in the third round $r_1 $ and all other robot moves to the midpoint of the remaining single line segment. Similarly, we can see for an Equilateral triangle where $p_4$ is not located at the center of the triangle. 
 \begin{figure}[h]
    \centering
    \includegraphics[width=0.35\textwidth]{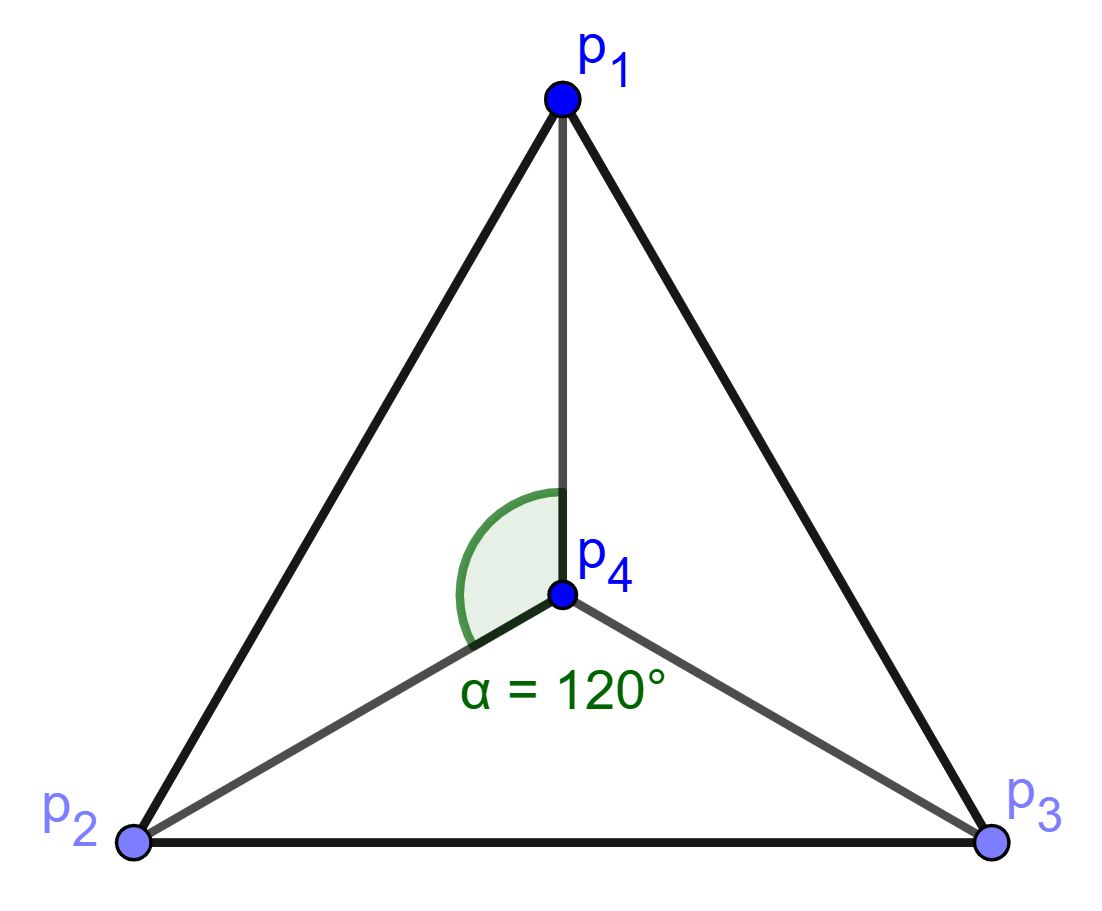}
    \caption{{\it (An illustration of case-3.2) A configuration where three robots are located at the vertices of the equilateral triangle and the remaining one is located at the centroid of the triangle.}}
    \label{fig: figure 10}
\end{figure}

Consider the case where $p_4$ is located at the center of an equilateral triangle $\Delta p_1p_2p_3$  as shown in Figure~\ref{fig: figure 10}. Definitely, $\angle p_1p_4p_3=\angle p_2p_4p_3=\angle p_1p_4p_2=120^\circ$. Suppose if $r_1$ observes $p_2$ and $p_3$ i.e., equilateral triangle $\Delta_{123}$ and $r_2, r_3$ observes $p_1, p_4$, and $r_4$ observes $p_2, p_3$.  Clearly, $\Delta_{124}, \Delta_{134}$ and $\Delta_{234}$ is isosceles triangle with vertex $p_4$ and $\angle PA_4 =120^\circ$. In this scenario, $r_1$ moves to $p_4$ and $r_2$(resp.$r_3$) moves to the midpoint of the base $\overline{p_1p_2}$(resp.$\overline{p_1p_3}$), and $r_4$ waits. Suppose the new position of $r_2$ and $r_3$ is $p_2'$, $p_3'$ respectively. $\Delta_{12'3'}$ is again an isosceles triangle with vertex $p_1$ and $\angle PA_1=120^\circ$ so robots $r_4$ and $r_1$ waits and $r_2, r_3$ moves to the base of $\Delta_{12'3'}$ in second round. In the third round, robots in both positions move to the midpoint and gather at a single location. 
  \end{proof}
  \end{lemma}

 From Lemmas~\ref{1.1},~\ref{1.5}, and ~\ref{1.6}, the following theorem holds.
 
         \begin{theorem}
Algorithm~\ref{alg:destination} gathers all synchronized robots in finite time
under the adversarial $(4,2)$-defected view model.
\end{theorem}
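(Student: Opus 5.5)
The plan is to assemble the theorem from the three lemmas the paper points to, organised as a two-phase argument: a \emph{contraction phase} driven by the geometric span $\mathcal{G}(t)$, then a \emph{finishing phase} once the span is at most $\delta$.

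\textbf{Invariant.} First I would record that every destination returned by Algorithm~\ref{alg:destination} lies in the convex hull of $\mathcal{O}_i(t)\subseteq \mathbb{P}(t)$. Each destination is a midpoint of two observed points, a centroid of three observed points, or $p_i$ itself. Because motion is along a straight segment toward that destination, every robot stays inside $CH(t)$. Hence $CH(t')\subseteq CH(t)$ for $t'>t$, and $\mathcal{G}$ is non-increasing. (This is the intended reading of Lemma~\ref{1.5}; the displayed inequality there has its sides reversed.)

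I would also show that once two robots are co-located they are never separated. Under FSYNC, co-located robots need not receive identical views, so this is not automatic, and it is a point I would have to settle rather than assume.

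\textbf{Contraction.} Next I need a strict decrease by a fixed amount. I would take a diametral pair $a,b\in\mathbb{P}(t)$ realising $\mathcal{G}(t)$ and argue that within a bounded number of rounds each extreme robot makes a move of length at least $\min(\delta,\cdot)$ toward an interior point.

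The only rule that lets a non-gathered robot stay still is the $120^\circ$ isosceles-vertex rule. The vertex of a $120^\circ$ isosceles triangle is never an endpoint of its longest side. So a robot at a diametral endpoint that waits must be seeing a view that omits its diametral partner. I would then need to rule out the adversary doing this to every extreme robot forever. My first attempt would be a case analysis on the global four-point configuration: whenever a robot waits, the global positions must contain a $120^\circ$ isosceles triple at that robot, which is a codimension-one condition. I would try to show that the non-waiting robots then move so as to destroy it, as in the centroid example of Lemma~\ref{1.6}.

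Given that no extreme robot is frozen, the convexity estimate in Lemma~\ref{1.5} yields $\mathcal{G}(t+c)\le \mathcal{G}(t)-\varepsilon(\delta,\mathcal{G}(t))$. Making $\varepsilon$ uniform, so that the span reaches $\le\delta$ in finitely many rounds rather than merely converging, is the \textbf{main obstacle}. I expect to need a quantitative bound: moving a diametral endpoint a distance $\delta$ toward an interior point shrinks its distances to the other points by an amount bounded below in terms of $\delta$ and the current span.

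\textbf{Finishing.} Once $\mathcal{G}(t)\le\delta$, non-rigidity disappears, since every robot reaches its destination in one round. From there I would split into cases:
\begin{itemize}
\item if $\mathbb{P}(t)$ is collinear, Lemma~\ref{1.1} gives gathering in finitely many rounds;
\item otherwise, Lemma~\ref{1.6} gives gathering within three rounds. It covers no isosceles triangle via Lemmas~\ref{1.2}--\ref{1.4}, isosceles or equilateral triples via \cite{kim2023gathering}, and the $120^\circ$ and centroid configurations.
\end{itemize}
Lemma~\ref{1.6} assumes all robots are single. I would therefore handle configurations with multiplicities (two or three distinct points) directly. With two points the midpoint rule merges them in one round. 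With three points of span $\le\delta$ the Case-3 rules are applied globally, which reduces to Lemma~\ref{1.1} or the triangle cases.

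Finally, Case-1 of the algorithm makes robots terminate only when they see no other position. Under the $(4,2)$ model a robot sees every other distinct position when there are at most two of them, so a robot terminates exactly when gathering has occurred. This completes the finite-time gathering claim.
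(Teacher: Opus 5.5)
\textbf{The finishing phase fails, and this cannot be repaired, because Algorithm~\ref{alg:destination} does not actually gather.} Your outline matches the paper's own proof: contraction via Lemma~\ref{1.5}, then finishing via Lemmas~\ref{1.1} and~\ref{1.6}. The obstacles you flag are genuine.

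Let $p_1,p_2,p_3$ be the vertices of an equilateral triangle and $p_4$ its centroid, with all robots single. The adversary shows $r_1$ the robots at $p_2,p_4$, shows $r_2$ those at $p_3,p_4$, shows $r_3$ those at $p_1,p_4$, and shows $r_4$ any two vertices. Each triangle $p_ip_{i+1}p_4$ is isosceles but not equilateral, with vertex $p_4$ and vertex angle $120^\circ$. So $r_4$ waits, and each $r_i$ with $i\le 3$, being at a base vertex, heads to the midpoint of $\overline{p_ip_{i+1}}$.

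Let all moves complete. The adversary may always allow this, and it is forced once $\mathcal{G}\le\delta$. The three midpoints form the medial triangle, which is equilateral with the same centroid $p_4$, where $r_4$ still sits. The configuration therefore recurs at half the scale, and the adversary can repeat this forever. $\mathcal{G}$ halves every round, four distinct positions persist, and gathering never occurs.

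Lemma~\ref{1.6} checks only one benign choice of views for this configuration. Your hope that the non-waiting robots ``destroy'' the $120^\circ$ triple fails exactly here, because they recreate it.

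\textbf{Your multiplicity step also fails.} The paper formalises views as $\mathcal{O}_i(t)=\{p_i(t)\}\cup\{p_j(t),p_k(t)\}$ with $j,k$ robot indices. Under this, a robot at a doubled point may be shown its own partner, so with three distinct positions it need not see both of the others. Hence the invariant ``co-located robots are never separated'' is false, and it is not true that ``with three points the Case-3 rules are applied globally.''

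Concretely, take collinear $p_1<p_2<p_3<p_4$. Let $r_1$ and $r_3$ be shown $p_2,p_4$, and let $r_2$ and $r_4$ be shown $p_1,p_3$. The robots land at $\frac{p_1+p_3}{2}<\frac{p_1+p_4}{2}<\frac{p_2+p_4}{2}$, with the middle point doubled.

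Now take any such single--double--single collinear pattern $a<b<c$. The two singles see everything and go to $\frac{a+c}{2}$. The two robots at $b$ are shown their partner plus $a$, and their partner plus $c$, respectively. They go to $\frac{a+b}{2}$ and $\frac{b+c}{2}$. This is the same pattern at half the span, again forever, and it contradicts the last step of Lemma~\ref{1.1}.

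So no argument, neither yours nor the paper's, can establish the statement for Algorithm~\ref{alg:destination} as written. The $120^\circ$ waiting rule and the handling of co-located robots would have to change first.
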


     %

  \section{Gathering under the Adversarial $(N,K)$-Defected View Model}
In this section, we study the gathering problem for a set $\mathcal{R} = \{r_1, r_2, \dots, r_N\}$ of $N \geq 3$ punctiform robots operating asynchronously in the two-dimensional Euclidean plane. Each robot independently executes the classical Look--Compute--Move cycle under a \textbf{non-rigid motion model}, in which movement toward a computed destination may be interrupted by an adversary, provided that a minimum positive progress is always guaranteed.

The robots share a common agreement on the direction of the $Y$-axis in their local coordinate systems, which provides a consistent notion of the north--south direction. However, they do not have any \textbf{multiplicity detection} capability, and hence multiple robots occupying the same position are perceived as a single robot.

A key feature of the model is the \textbf{adversarial defective view}. During the Look phase, a robot may fail to observe up to $N-K$ other robots, chosen adversarially. As a result, at most $K$ robot positions are visible in any observation, including the robot’s own position. We denote by $\mathcal{O}$ the set of robot positions observed during a Look phase.

Let $\mathbb{P}(t) = \{p_1(t), p_2(t), \dots, p_m(t)\}$ denote the set of all $m \leq N$ distinct robot positions at time $t$, where each position may be occupied by one or more robots. In this model, we assume that robots share a common understanding of the orientation and direction of the $Y$-axis. The positive direction of the $Y$-axis is considered as north, while the negative direction is considered as south. This shared agreement allows us to order the robot positions by horizontal lines, based on their $y$-coordinates. The set of such horizontal lines is denoted by
\[
\mathcal{L}(t) = \{L_1(t), L_2(t), \dots, L_b(t)\},
\]
where $|\mathcal{L}(t)| = b$ is the total number of distinct horizontal lines. The set is an ordered sequence from north to south direction,i.e., $L_1(t)$ is the north-most line, also known as the topmost line, and $L_k(t)$ denotes the $k^{th}$ line from the north. The cardinality of $L_k(t)$ is denoted by $|L_k(t)|$, which represents the number of robot positions on the $k^{th}$ horizontal line at any time $t$. Figure~\ref{fig: figure 11}(a) illustrates an example of this arrangement.

     \begin{figure}[h]
    \centering
    \includegraphics[width=0.45\textwidth]{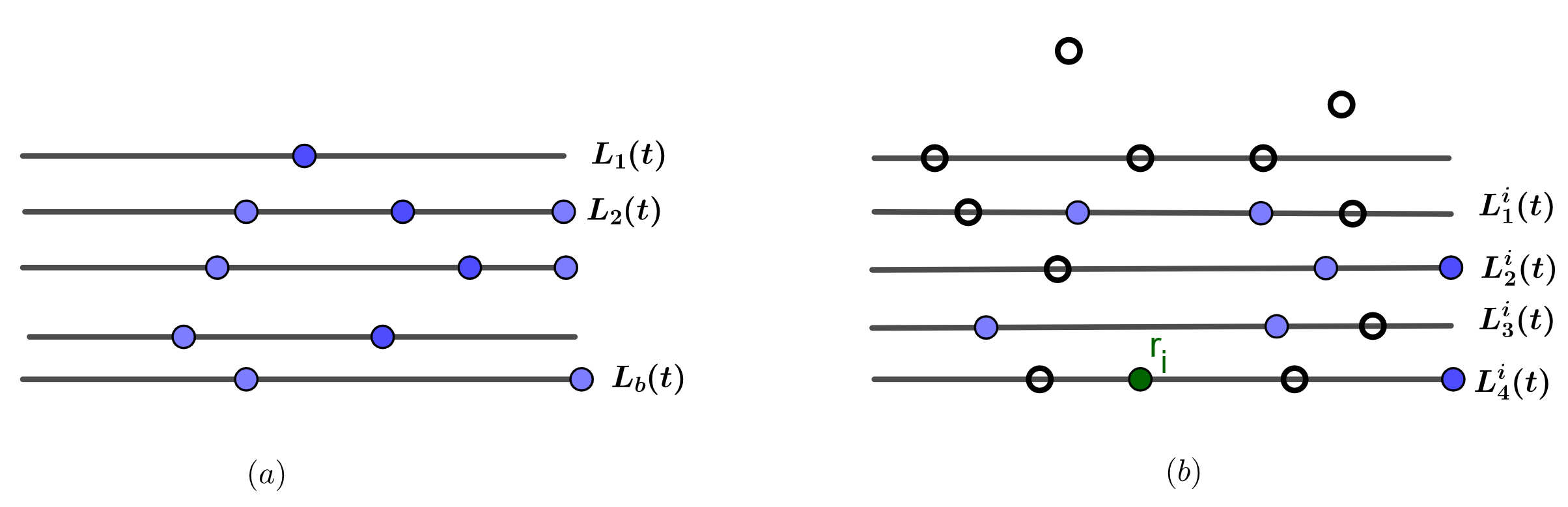}
    \caption{{\it (a). shows the ordering of the robots in $\mathbb{P}(t)$, and (b). shows the ordering of the robots in $\mathcal{O}_i(t)$(the robots observed by robot $r_i$ at time $t$).}}
    \label{fig: figure 11}
\end{figure}
     
    For an individual robot $r_i$ located at position $p_i(t)$, the observed set of positions $\mathcal{O}_i(t)$ (including its own position) can similarly be arranged into a set of horizontal lines, denoted by $\mathcal{L}^i(t)$. These lines are ordered from north to south, where $L_k^i(t)$ represents the $k$-th horizontal line in this ordering, and $|L_k^i(t)|$ denotes the number of robot positions on $L_k^i(t)$ that are visible to $r_i$, as illustrated in Figure~\ref{fig: figure 11}(b).

Due to the defect in the robot’s vision, the observed structure may differ from the actual global configuration. In particular, if $r_i$ does not observe any position above its own, it concludes that it lies on the topmost observed line $L_1^i(t)$ at time $t$. Similarly, if some corner positions are not visible, the robot may incorrectly perceive itself as being at a corner position of the observed line.

If the observed set $\mathcal{O}_i(t)$ is collinear, then $r_i$ further classifies its position based on the relative ordering of the visible positions. If $r_i$ finds its position strictly between the two other visible positions, it defines itself as being at an \emph{internal} position; otherwise, it defines itself as being at an \emph{external} position.

We now introduce several functions and notations that will be used to design Algorithm~\ref{alg:movement}.
 \begin{figure}[h]
    \centering
    \includegraphics[width=0.5\textwidth]{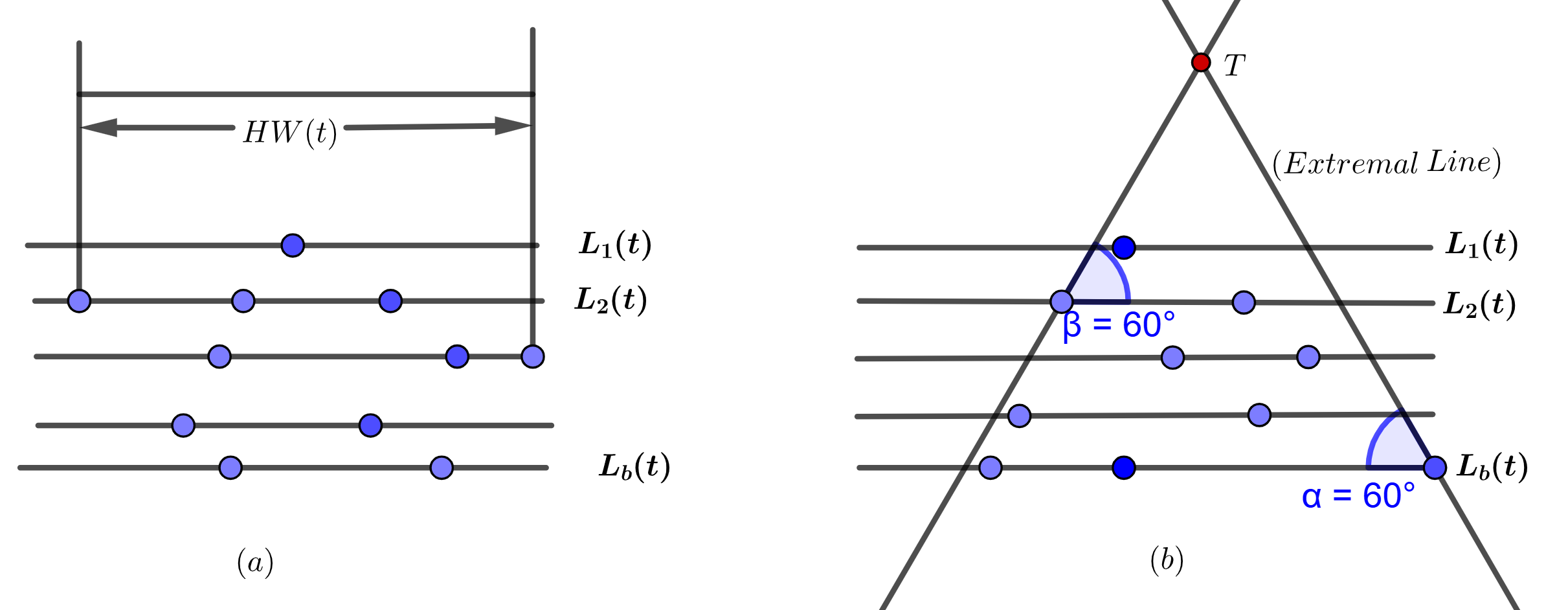}
    \caption{{\it (a). shows the horizontal width($HW(t)$) of the configuration $\mathbb{P}(t)$, and (b). shows how robots' positions, defining horizontal width and extremal lines, may not always be the same. }}
    \label{fig: figure 12}
\end{figure}
     \begin{itemize}
      \item \textbf{Horizontal Width $HW(t)$ :} For any configuration $\mathbb{P}(t)$ at time $t$, horizontal width is defined as the horizontal distance between two corner-most positions in $\mathbb{P}(t)$ as shown in the Figure~\ref{fig: figure 12}(a). 
      \item \textbf{Extremal Lines:} are the closest $60^\circ$, non-parallel lines such that all the robots in $\mathbb{P}(t_0)$ are deployed between both the lines, and positions of robots on these lines are defined as {\bf extremal positions/ points} as illustrated in the Figure~\ref{fig: figure 12}(b).

         \item \textbf{Computelevel():} For any robot $r_i$, this function takes $\mathcal{O}_i(t)$ as input and computes the set of horizontal lines $\mathcal{L}^i(t)$ and orders the set from north to south direction.
          \begin{figure}[h]
    \centering
    \includegraphics[width=0.4\textwidth]{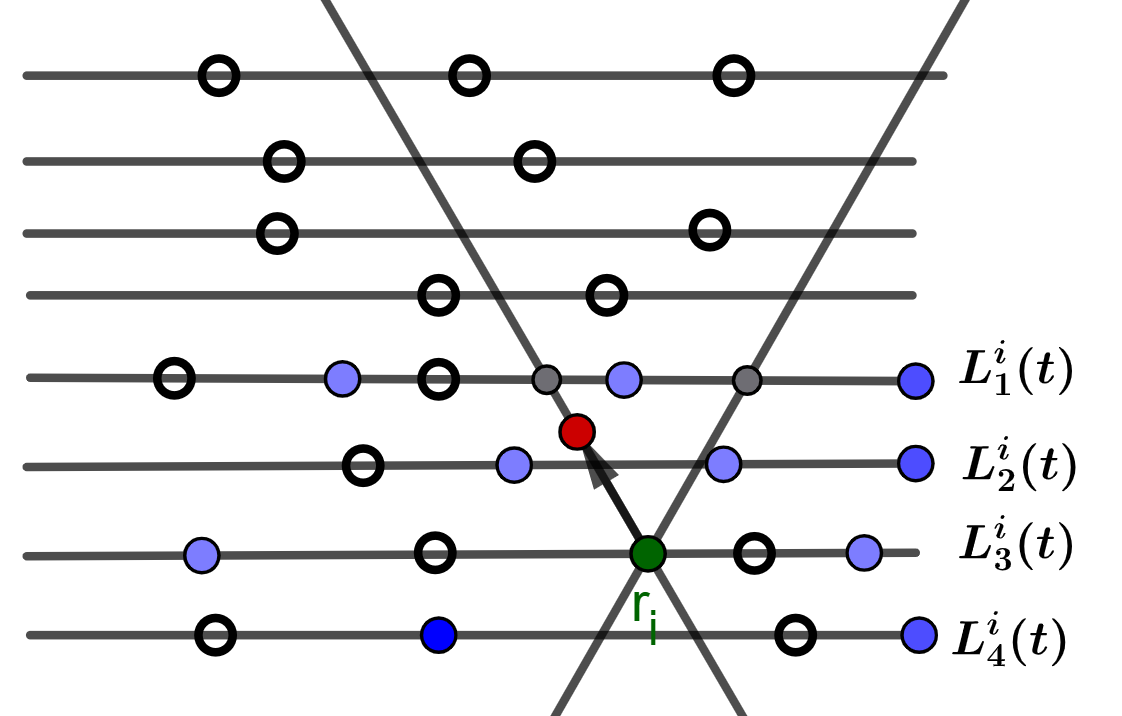}
    \caption{{\it Robot $r_i$ orders itself in $L_3^i(t)$ as it misses one robot in $L_1(t)$ and computes Go-Lines and moves accordingly towards the intersection point in the line $L_1^i(t) \simeq L_5(t)$. }}
    \label{fig: figure 13}
\end{figure}
         \item \textbf{ComputeGoLines():} If robot $r_i$ observes one or more robot positions above the horizontal line passing through $p_i(t)$, it invokes the function \texttt{ComputeGoLines()}, which computes two distinct Go-lines originating from $p_i(t)$ such that each makes a $60^\circ$ angle, computed clockwise from the horizontal lines passing through $p_i(t)$. These Go-lines ensure vertical progress even in highly adversarial situations and prevent $r_i$ from infinitely oscillating due to the adversarial defected view(as illustrated in Figure~\ref{fig: figure 13}).

          \item \textbf{Computecornerpoints():} For any robot $r_i$, this function takes $L^i_1(t)$ and returns corner positions on the north-most line $L^i_1(t)$.
        \item \textbf{ComputePerpendicular():} Using the function \texttt{ComputePerpendicular()}, robot $r_i$ computes perpendicular lines from the computed corner positions of the north-most horizontal line $L_1^i(t)$ to its current horizontal line. This allows $r_i$ to avoid unnecessary movement along the Go-line.

          \item \textbf{ComputeIntSec():} Using function \texttt{ComputeIntSec()}, $r_i$ computes intersection of perpendicular lines with $Go-lines$. 
          
          \item \textbf{Nearest():} Using this function, $r_i$ compare between two points. It takes both points as input and returns the nearest point from $p_i(t)$(current location).
     \end{itemize}
    \section{Algorithm~\ref{alg:movement}}

\subsection{High-Level Idea of the Algorithm}

In the adversarial $(N, K)$-defected view model, each robot works under strong limitations. During any activation, a robot may miss observing many other robots, its view may change arbitrarily between activations, and its movement can be interrupted by the adversary. In addition, robots are asynchronous, oblivious, and have no multiplicity detection capability. These constraints make coordination and gathering difficult.

The proposed algorithm overcomes these difficulties by using only simple geometric information that is always locally available and helps to make vertical despite the adversarial missing. The only common agreement among robots is the direction of the $Y$-axis. Using this shared direction, robots can consistently compare which observed positions are above or below them, even when their view is incomplete.

At a high level, each robot decides how to move based on the vertical ordering of the positions it can see. The algorithm follows a simple \emph{Go-Line strategy} that guarantees two kinds of progress. First, robots make \emph{vertical progress}: whenever possible, an active robot moves northward. This ensures that robots gradually align on higher horizontal levels. Importantly, a robot can make vertical progress even if it sees only one other robot.

Second, the algorithm ensures \emph{horizontal contraction}. The movement rules are designed so that the horizontal width $HW(t)$ of the swarm never increases and usually decreases over time. This contraction happens naturally as robots move while respecting the relative positions they observe.

The algorithm is robust to non-rigid motion. Even if a robot is stopped before reaching its destination, it always moves by at least a minimum distance $\delta > 0$ in the intended direction. As a result, every activation guarantees some positive vertical progress towards the computed north-most horizontal line. The proposed Go-Line strategy ensures no oscillation due to the adversarial defect in the vision.

By repeatedly applying these simple local rules, the configuration of the swarm shrinks over time. Once the configuration is sufficiently compact, gathering is completed in finite time. Thus, Algorithm~\ref{alg:movement} guarantees gathering for asynchronous, oblivious robots under the adversarial $(N, K)$-defected view model, using only agreement on the $Y$-axis and no additional coordination.

\subsection{Description of Algorithm~\ref{alg:movement}}
When a robot $r_i$ becomes active at time $t$, it observes a local configuration $\mathcal{O}_i(t)$, which may be incomplete due to the $(N, K)$-defected view model. Using the common agreement on the direction of the $Y$-axis, $r_i$ partitions the observed positions, including its own, into parallel horizontal lines and orders them from south to north. Based on its position within this ordering and on the geometric structure induced by the observed configuration, $r_i$ selects an appropriate movement rule. According to the different sub-cases defined by the number of observed robots and their distribution across the north-most horizontal line, $r_i$ computes a destination point using only local information and moves toward it, making a minimum progress of at least $\delta > 0$ despite possible interruptions due to non-rigid motion.

    \begin{itemize}
       \item \textbf{Case-A $\boldsymbol{|\mathcal{O}_i|=1}$:}
Robot $r_i$ observes no robot position other than its own. Hence, $r_i$ concludes that all robots are located at $p_i(t)$. Therefore, $p_i(t)$ is identified as the gathering point, and $r_i$ stops executing the algorithm.

            \item \textbf{Case-B $\boldsymbol{|\mathcal{O}_i|>1:}$} Suppose robot $r_i$ observes at least one robot position other than its own. Then, using the function \texttt{ComputeLevel()}, robot $r_i$ partitions the observed positions, including its own position, into horizontal levels and orders them from north to south. Based on its current level with respect to the observed positions, robot $r_i$ initiates its movement according to the corresponding rule of the algorithm.

          \begin{itemize}
        \item \textbf{Case-B(1): Robot $r_i$ is located on the topmost line.} Let robot $r_i$ be located at position $p_i(t)$. If robot $r_i$ does not observe any robot position strictly above the horizontal line passing through $p_i(t)$, then this line is identified as the topmost line and is denoted by $L_1^i(t)$.

        \begin{itemize}
           \item \textbf{Case-B(1.1):} If robot $r_i$ observes at least one robot position on a horizontal line strictly below the topmost line $L_1^i(t)$, then $r_i$ waits.

\item \textbf{Case-B(1.2):} Suppose robot $r_i$ is located at position $p_i(t)$ and observes all robot positions on the topmost line $L_1^i(t)$. In this case, robot $r_i$ further classifies its position into one of the following two categories.

         \begin{figure}[h]
    \centering
    \includegraphics[width=0.48\textwidth]{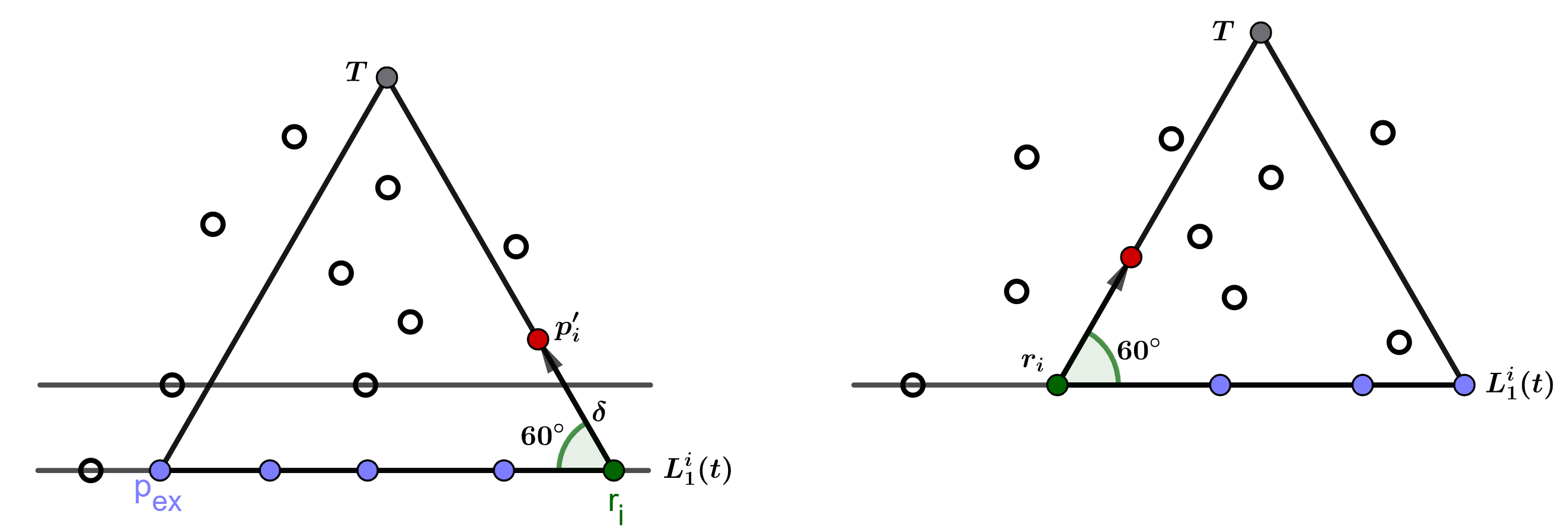}
    
    \caption{{\it In left, $r_i$ finds itself at the external position on the topmost line, and compute vertex of the equilateral triangle. On the right, $r_i$ misses corner one and finds itself at an external position, and computes the destination. }}
    \label{fig: figure 14}
\end{figure}
 \textbf{(i) Internal position:} If robot $r_i$ observes at least one robot position on both sides of $p_i(t)$ along the line $L_1^i(t)$, then $p_i(t)$ is identified as an \emph{internal position}. If $p_i(t)$ is an internal position, then $r_i$ remains stationary.

\textbf{(ii) External position:} If robot $r_i$ does not observe any robot position on at least one side of $p_i(t)$ along the line $L_1^i(t)$, then $p_i(t)$ is identified as an \emph{external position}. Let $p_j(t)$ be another external position observed by robot $r_i$. Using the function \texttt{ComputeTriangle()}, robot $r_i$ constructs an equilateral triangle $\triangle p_i T p_j$ such that the vertex $T$ lies strictly above the line $L_1^i(t)$. Robot $r_i$ sets $T$ as the destination point, and during the Move phase, robot $r_i$ moves toward the point $T$ along the line segment $\overline{p_i(t)T}$.\\
 \end{itemize}
        \item \textbf{Case-B(2): Robot $r_i$ is not located on the topmost line.} Suppose robot $r_i$, located at $p_i(t)$, observes a set of positions during the \emph{Look} phase. It stores these observations in $\mathcal{O}_i(t)$ and orders them using the auxiliary function \texttt{ComputeLevel()}. If there exists at least one observed position on a horizontal line strictly above the line $L_k^i(t)$ passing through $p_i(t)$, then $r_i$ identifies itself as being on the $k^{\text{th}}$ horizontal line. In this case, the active robot $r_i$ computes \textbf{Go-Lines($60^\circ$)}, consisting of two distinct rays originating from $p_i(t)$ and directed northward, each forming a $60^\circ$ clockwise angle with $L_k^i(t)$, as illustrated in Figure~\ref{fig: figure 16}.

    Subsequently, $r_i$ determines its path based on the geometric configuration of the visible robot positions on the north-most(topmost) horizontal line $L_1^i(t)$ with respect to the computed Go-Lines:

\end{itemize}

 \begin{itemize}
           
            \item \textbf{Case-B(2.1):} Suppose robot $r_i$ observes only one position, say $p_j(t)$ on the topmost line,i.e., $|L_1^i(t)| = 1$. 
If the robot position $p_j(t)$ lies on the vertical axis of robot $r_i$, then $r_i$ sets $p_j(t)$ as the destination and performs non-rigid motion towards $p_j(t)$ along its vertical axis.

Now suppose $p_j(t)$ is located on the vertical axis of $r_i$. Then, robot $r_i$ computes Go-lines using the auxiliary function \texttt{ComputeGoLines()}. If $p_j(t)\in L_1^i(t)$, such that $p_j(t)$ lies outside the Go-Lines, then robot $r_i$ computes the intersection of Go-Lines and the topmost horizontal line $L_1^i(t)$. Robot $r_i$ sets the nearest intersection point from $p_j(t)$ as the destination and moves towards it along the respective Go-Line.

 Now consider the case where a position $p_j(t)\in L_1^i(t)$ lies inside the Go-Lines. To avoid unnecessary movement and ensure that the horizontal span does not increase, robot $r_i$ computes a perpendicular from $p_j(t)$ to the horizontal line passing through $p_i(t)$ using \texttt{ComputePerpendicular()}. It then computes the intersection of this perpendicular with the corresponding Go-Line, sets this intersection point as its destination, and moves toward it along that Go-Line as illustrated in Figure~\ref{fig: figure 16}(c).

            \item \textbf{Case-B(2.2):} Suppose robot $r_i$ observes more than one robot position on the topmost horizontal line $L_1^i(t)$,i.e., $|L_1^i(t)|>1$, then according to their deployment with respect to the Go-Lines originating from $p_i(t)$.
             \begin{figure}[h]
    \centering
    \includegraphics[width=0.48\textwidth]{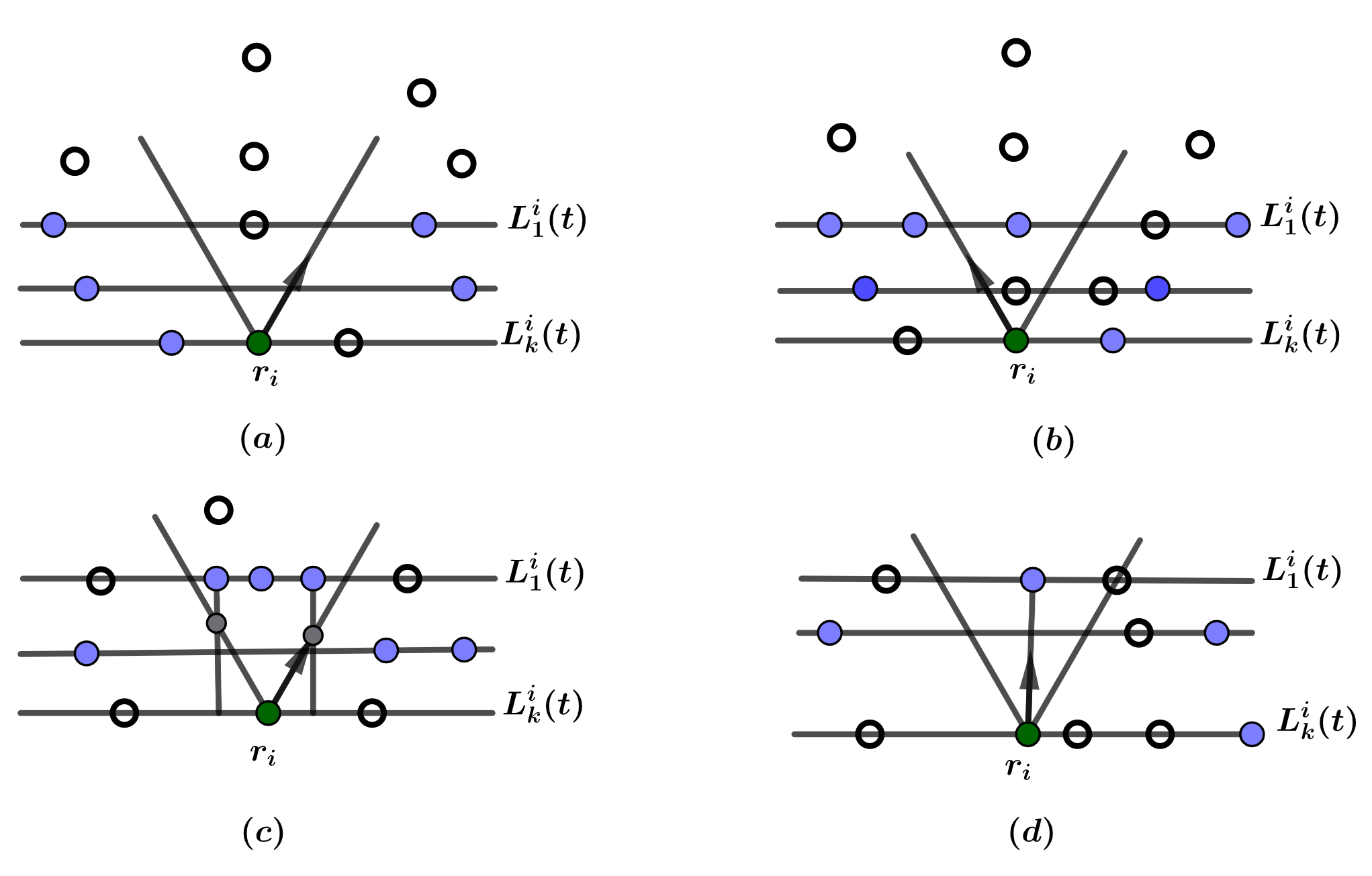}
    \caption{{\it Showing movement of active robot along the Go-Line accordingly.}}
    \label{fig: figure 16}
\end{figure}
            \begin{itemize}
                \item \textbf{Case-B(2.2.a):} If both Go-Lines intersect the topmost horizontal line $L_1^i(t)$ at points lying between the visible positions on that line, then $r_i$ computes the two intersection points between the Go-Lines and $L_1^i(t)$. Let these intersection points be denoted by $d_i$ and $d_i'$. The robot $r_i$ then selects one of these points as a destination arbitrarily and moves at least $\delta>0$ toward the destination along the corresponding Go-Line, as illustrated in Figure~\ref{fig: figure 16}(a).

               \item \textbf{Case-B(2.2.b):}If exactly one Go-Line intersects the topmost horizontal line $L_1^i(t)$ at a point lying between the visible positions on that line, while the other Go-Line intersects $L_1^i(t)$ outside the range of the visible positions, then $r_i$ moves toward the intersection point, denoted by $d_i$, along the Go-Line whose intersection lies between the visible positions, as illustrated in Figure~\ref{fig: figure 16}(b).

               \item \textbf{Case-B(2.2.c):}If both Go-Lines originating from $r_i$ intersect the topmost horizontal line $L_1^i(t)$ at points lying outside the range of the visible positions, then $r_i$ constructs perpendicular lines from the corner positions of $L_1^i(t)$ onto the line passing through $p_i(t)$ using the function \texttt{ComputePerpendicular()}. It then computes the intersection points of these perpendiculars with the corresponding Go-Lines, denoted by $d_i$ and $d_i'$. Finally, $r_i$ selects the closer of the two points using \texttt{Nearest($d_i, d_i'$)} and moves toward it along the associated Go-Line, as illustrated in Figure~\ref{fig: figure 16}(c) and (d).

        \end{itemize}
        \end{itemize}
        \end{itemize}
    Every active robot follows the proposed Algorithm repeatedly until it gathers at the same location. If robots stop watching any robot position other than itself, then the Algorithm terminates.
    
\begin{algorithm}[!ht]
\caption{Movement Algorithm for Robot $r_i$}
\label{alg:movement}
\begin{algorithmic}[1]
\footnotesize
\Require Observation set $\mathcal{O}_i(t)$, lines $L^i_1(t)$, $L^i_k(t)$
\Ensure Destination $d_i$ for robot $r_i$

\If{$|\mathcal{O}_i(t)| = 1$}
    \State $d_i \gets p_i(t)$ \Comment{No visible robots, wait}

\Else
    \State Assume $|\mathcal{O}_i(t)| > 1$

    \If{$r_i \in L^i_1(t)$}

        \If{$r_i$ observes positions below $L^i_1(t)$}
            \State $d_i \gets p_i(t)$ \Comment{wait}

        \ElsIf{$r_i$ sees all positions on $L^i_1(t)$}

            \If{$r_i$ is at an internal position}
                \State $d_i \gets p_i(t)$ \Comment{wait}
            \Else
                \State Let $p_{ex}$ be another external visible position on $L^i_1(t)$
                \State $d_i \gets \texttt{ComputeVertex}(p_i(t), p_{ex})$
            \EndIf

        \EndIf

    \ElsIf{$r_i \in L^i_k(t)$ \textbf{and} $r_i$ observes positions above $L^i_k(t)$}

        \State $\text{CornerPts} \gets \texttt{ComputeCorner}(L^i_1(t))$
        \State $\text{GoLines} \gets \texttt{ComputeGoLines}(L^i_1(t))$

        \If{$|L^i_1(t)| = 1$}
            \State Let $p_j \in L^i_1(t)$
            \If{$p_j$ lies on the vertical axis of $r_i$}
                \State $d_i \gets p_j$
            \Else
                \State Compute perpendicular from $p_j$ to $L^i_k(t)$
                \State $d_i \gets \texttt{ComputeIntSec}(\text{perpendicular}, \text{GoLine})$
            \EndIf

        \Else

            \If{all points of $L^i_1(t)$ lie outside Go-Lines}
                \State $d_i \gets (Go\text{-}Line) \cap L_1^i(t)$
                \State $r_i$ moves towards $d_i$ along one of the Go-Lines

            \ElsIf{exactly one Go-Line intersects $L_1^i(t)$ between visible positions}
                \State $d_i \gets (Go\text{-}Line) \cap L_1^i(t)$
                \State $r_i$ moves towards $d_i$ along that Go-Line

            \ElsIf{both Go-Lines lie between visible positions}
                \State $(d_i, d_i') \gets (Go\text{-}Line) \cap L_1^i(t)$
                \State $r_i$ moves towards one of $d_i$ or $d_i'$ along the respective Go-Line

            \Else
                \State Compute perpendiculars from corner points of $L^i_1(t)$
                \State $(d_i,d_i') \gets \texttt{ComputeIntSec}(\text{perpendicular}, \text{GoLine})$
                \State $r_i$ moves towards $\texttt{Nearest}(d_i,d_i')$ along the respective Go-Line
            \EndIf

        \EndIf
    \EndIf

\EndIf

\State \Return $d_i$

\end{algorithmic}
\end{algorithm}

\subsection{Correctness}

In this section, we prove the correctness of Algorithm~\ref{alg:movement} for the gathering problem under the adversarial $(N,K)$-defected view model, where $1 \leq K \leq N-2$. 
Despite the severely constrained observation setting—in which each active robot may fail to observe up to $N-2$ other robots and the set of observed positions may vary arbitrarily between successive \emph{Look} phases, we show that the algorithm guarantees that all robots eventually gather at a single common location.

The remainder of this section formalizes these observations and presents a sequence of lemmas that collectively establish that Algorithm~\ref{alg:movement} correctly solves the gathering problem under the adversarial $(N, K)$-defected view model for any $N \geq 3$, relying solely on agreement along the vertical ($Y$) axis and without requiring any additional robot capabilities.

\begin{lemma}{\it Regardless of adversarial missing, Go-Lines ensures the guaranteed vertical progress of active robots along the upward direction. }
\begin{proof} Every active robot located below $L_1()$ always moves either along its vertical axis or along the computed Go-Lines. First, let $r_i\in L_k^i()$ be any active robot and $|L_1^i()|=1$, say $p_j$, then if $p_j$ has the same $X-$coordinate as $p_i$, then $r_i$ moves at least $\delta$ distance towards $p_j$ along its vertical axis as in Figure~\ref{fig: figure 16}(d), otherwise $r_i$ always move along the computed Go-Line. Every active robot $r_i$ that sees locations above its horizontal line always computes Go-Lines($60^\circ$). According to the proposed Algorithm, if the robot $r_i\in L_k^i()$ computes its next destination at the visible topmost line $L_1^i()$, then it moves at least $\delta$ along the $60^\circ$ line, so in each move $y-$coordinate of $ r_i$'s location increases by at least $\delta\frac{\sqrt{3}}{2}$ units. If all the positions on $L_1^i()$ located between the intersection points of Go-Lines and $L_1^i()$ as in Figure~\ref{fig: figure 16}(c), then $r_i$ computes perpendicular lines using \texttt{Computeperpendicular()} from corner positions located on $L_1^i()$ to $L_k^i()$, and where these perpendicular lines intersects Go-Lines, $r_i$ chooses the nearest intersection point on one of the Go-Line and moves towards it. When the robot finds itself at an external position on the topmost line and does not see any position below, then it computes an equilateral triangle with the other observed external position and moves towards the upward direction along the non-horizontal side of the triangle, and progresses at least $\delta\frac{\sqrt{3}}{2}$ unit along the upward direction and again robot below chase the topmost positions. The process continues until all the robots gather at the same location.  \\
Go-lines are robust to adversarial hiding, as moving along the Go-lines does not depend on how many robots any active robot looks or misses at the topmost line; the proposed strategy is concerned only with their positions with respect to the computed Go-Lines. By the adversarial hiding, the robot may change its Go-Line in each round, but it makes at least $\delta\frac{\sqrt{3}}{2}$ units progress towards the visible topmost line to $r_i$.
\end{proof}
\end{lemma}

\begin{lemma}{\it Horizontal width $HW()$ of robot positions decreases monotonically,  $HW(t+1)\leq HW(t)$. }
\begin{proof} Every active robot moves towards the upward direction along the $60^\circ$ Go-Lines according to the Algorithm~\ref{alg:movement}. First, consider if the horizontal width of the configuration is determined by the extremal robots, then if the bottom-most extremal robot moves $\delta$ distance along the extremal line, then it minimizes the total width by at least $\frac{\delta}{2}$ units. The Algorithm is designed in such a way that non-extremal robots never move outside the extremal line. Suppose $r_i\in L_k()$ is a non-extremal robot and Go-Lines from $p_i()$ intersects $L_1^i()$ outside the span of $L_1^i()$, then  $r_i$  computes perpendicular lines computed from corner-most positions of $L_1^i()$ on $L_k^i()$  and computes intersection of perpendicular lines and Go-Lines and moves to one of the intersection point as shown in the figure. If the corner-most position of $L_1^i()$is located on the Go-line and it is also an extremal point, then only $r_i$ can reach the extremal line but can not cross it in any situation. If the horizontal width of the configuration is not determined by the extremal robots, then they behave like a non-extremal robot. The process continues until all the robots reach the common destination. The extremal robots move up, and the non-extremal robots chase them. In each round, the horizontal width of the robot positions decreases monotonically, i.e., $HW(t)\leq HW(t+1)$. \end{proof}
\end{lemma}

\begin{lemma}{\it If the distance between extremal points on $L_1(t_0)$ is $d$, then all the robots definitely gather at the intersection  $T$ of both extremal lines, located at $Y_{new}=Y_{max}(0)+\frac{\sqrt{3}}{2}d$.}
\begin{proof} Let at $L_1(t_0)$, distance between extremal points is $d$. Both the extremal lines are non-parallel, and each makes $60^\circ$ angle with $L_1(t_0)$, so they must intersect each other at some point $T$ located $\frac{\sqrt{3}}{2}d$ units above the line $L_1(t_0)$. From Lemmas 7 and 8, we can see the guaranteed progress of robots along the upward direction without Going outside the extremal lines. Consider the action of extremal robots. If any extremal robot computes the equilateral triangle with a non-extremal position, then it always computes the vertex at the extremal line along which it is moving. If it computes the triangle vertex with extremal robots, then it must be the intersection $T$ of both the extremal lines. The Algorithm guarantees that no robot moves across $T$. If any one of the extremal robots reaches $T$, then it will be at the topmost position and will wait for the other robots. Robots on the same and another extremal line must reach $T$ in finite time and will wait for non-extremal robots. The non-extremal robots will be on the same vertical axis after a finite number of steps because of the perpendicular-line rule imposed by the Algorithm, and reach $T$ in finite time. So the gathering can be achieved at $T$ located $\frac{\sqrt{3}}{2}d$ units above the $ Y_{max}(0)$ as shown in Figure~\ref{fig: figure 15}.
  \begin{figure}[h]
    \centering
    \includegraphics[width=0.45\textwidth]{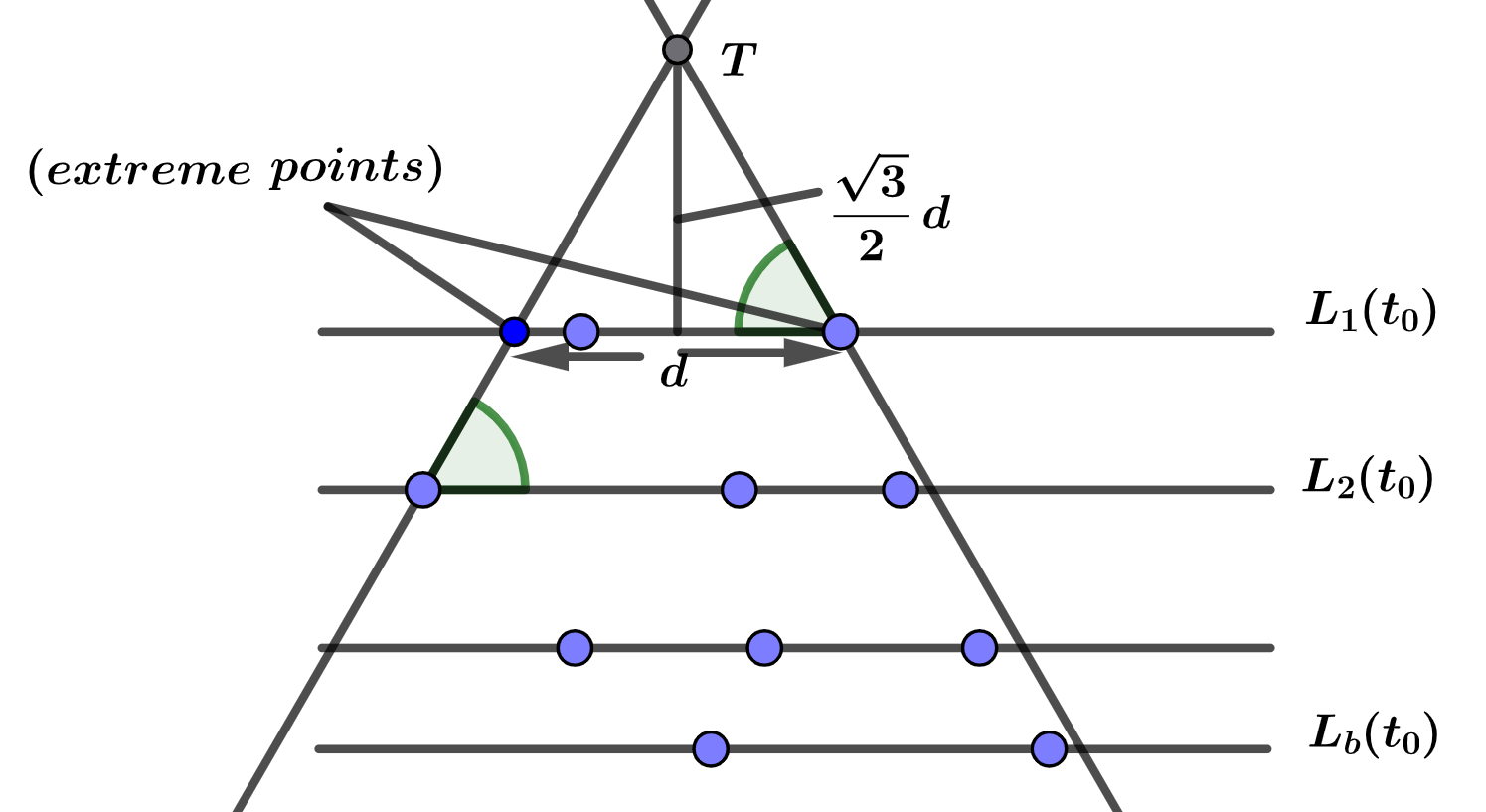}
    \caption{{\it As the distance between extremal lines on the line $L_1(t_0)$ is $d$, then they must intersect each other at $\frac{\sqrt{3}}{2}d$ units above $L_1(t_0)$.}}
    \label{fig: figure 15}
\end{figure}
\end{proof}
\end{lemma}
From Lemmas 2.1, 2.2, and 2.3, it follows that-\\

\begin{theorem} {\it In the adversarial $(N, K)$ defected view model $(N\geq3)$, Algorithm~\ref{alg:movement} solves gathering for asynchronous robots without multiplicity detection, provided they have agreement on the vertical axis of the local coordinate system.}
\end{theorem}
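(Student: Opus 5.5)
The plan is to assemble the theorem from the three lemmas of this section: the Go-Line progress lemma, the horizontal-width lemma, and the extremal-line lemma. The argument is organized around a bounded target region together with a potential that strictly improves.

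\textbf{Step 1: an invariant.} Fix the initial configuration $\mathbb{P}(t_0)$. Let $T$ be the intersection of the two extremal $60^\circ$ lines, which lies $\frac{\sqrt{3}}{2}d$ above $L_1(t_0)$. Consider the closed triangle $\Delta$ bounded by the two extremal lines and the horizontal line through the south-most robot. I will show that no robot ever leaves $\Delta$, by induction over the (asynchronous) sequence of Move phases.
\begin{itemize}
\item Every destination computed in Case-B(2) lies on a Go-Line of the robot, or on its vertical axis, and it is clipped by perpendiculars dropped from visible corner points.
\item Every destination computed in Case-B(1.2) is the apex of an upward equilateral triangle over two top-line positions.
\end{itemize}
A ray of slope $\pm 60^\circ$ issued from a point of $\Delta$ stays in $\Delta$ as long as it does not cross the parallel extremal line. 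The perpendicular-clipping rule guarantees it stops before that line. This is the content of the horizontal-width lemma, and I will invoke it here. Stale snapshots in $\mathcal{ASYNC}$ do not break this, because $\Delta$ is fixed in time and convex.

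\textbf{Step 2: a potential.} Take the minimum $y$-coordinate over all robots, together with the number of robots attaining it. By the Go-Line lemma, any activated robot that sees a position above it increases its $y$-coordinate by at least $\frac{\sqrt{3}}{2}\delta$, or else reaches its destination. I also need to show that robots on the true topmost line never stall forever. Such a robot waits (Case-B(1.1)) only while it sees something below it. The lower robots keep rising, and their $y$-values are bounded above by $T$. So after finitely many activations, every robot not yet at the top has climbed to within $\delta$ of its target line.

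\textbf{Step 3: finishing.} I expect this step to be the main obstacle, since adversarial hiding can make a robot believe it is topmost or extremal when it is not. I would first argue that the top line can only move upward. Robots on it either wait or move toward an apex lying inside $\Delta$, and the apex $T$ is a fixed point of the rule. Once the width of $\Delta$ at the current top level is below $\delta$, every move reaches its destination. The perpendicular rule then aligns non-extremal robots vertically with a top position, and they reach it. The extremal-line lemma then gives convergence to $T$ in finitely many further activations.

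The final configuration is a single point, so every robot eventually sees only itself and stops by Case-A. A robot cannot stop earlier than this, because $K\ge1$ forces it to observe some other position whenever one exists. This yields gathering in finite time.
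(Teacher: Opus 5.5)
Your decomposition is the paper's own: there the theorem is obtained simply by citing the Go-Line, width and extremal-line lemmas. Your Step~1 invariant is essentially sound. Steps~2--3, however, are where \emph{finite-time} gathering has to come from, and they do not go through.

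The claim ``$y$-values increase and are bounded by $T$'' gives convergence, not finitely many activations, for three reasons. A move that reaches its destination can be arbitrarily short, e.g.\ a perpendicular-rule destination under a top position that is horizontally very close. The ``target line'' is an adversarially chosen visible line that changes at every Look. And non-top robots are not forced to rise at all: a robot shown only lower robots waits by B(1.1), and one shown same-line neighbours on both sides waits by B(1.2)(i). For $K=1$ the adversary can impose the former on any robot that has a robot below it.

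Worse, the progress you take from the Go-Line lemma can be zero. Take $N=4$, $K=2$, with robots at $(0,0),(0.3,0),(0,1),(0.3,1)$. Show each lower robot the two upper positions, and each upper robot the other upper one plus a lower one. The upper robots wait by B(1.1). For a lower robot, both Go-Lines meet $y=1$ outside $[0,0.3]$ with the visible positions between them, so B(2.2.c) applies. The perpendicular from the corner directly above it is its own vertical, which meets its Go-Lines only at the robot itself, so \texttt{Nearest} returns its current position. Nothing ever moves. Hence your sentence ``the perpendicular rule aligns non-extremal robots vertically with a top position, and they reach it'' is false for the rule as written, since the vertical-axis move exists only when $|L_1^i|=1$. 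The rule must be patched before any liveness argument can work.

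Step~3 presupposes that the top level comes within $\frac{\sqrt{3}}{2}\delta$ of $T$, and nothing forces this. A top robot waits whenever its view contains a lower robot. So the adversary can freeze the top level until every robot has reached it \emph{exactly}; ``within $\delta$'' is not enough. When the level does rise, the rise is fixed by the distance to an adversarially chosen visible position on that line and has no lower bound. You have therefore not excluded top heights converging strictly below $T$. Even near $T$, ``every move reaches its destination'' fails for robots still far below, whose Go-Line paths can exceed $\delta$. Citing the extremal-line lemma does not close this, since it only asserts arrival at $T$ ``in finite time''.

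What is missing is a quantity that drops by a fixed amount over every bounded stretch of activations. A useful start: with $u=x-y/\sqrt{3}$ and $v=x+y/\sqrt{3}$, every permitted move ($60^\circ$, $120^\circ$ or vertical) is non-increasing in $u$ and non-decreasing in $v$. Combined with your invariant, the extremal robots never leave the extremal lines and $T$ is the only possible gathering point. But this yields only convergence of each robot's position, not finite-time gathering.

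Two smaller points on Step~1:
\begin{itemize}
\item A $\pm 60^\circ$ ray can leave $\Delta$ only through the \emph{non-parallel} extremal line, not the parallel one.
\item The invariant needs B(2.2.c) read as in the pseudocode: when all visible top positions lie outside the cone on one side, the robot goes to the Go-Line's intersection with $L_1^i$. The text's perpendicular rule there sends $(0,0)$, seeing $(2,1)$ and $(3,1)$, to $(2,2\sqrt{3})$. That point lies beyond the extremal line through $(3,1)$ whenever $(3,1)$ is extremal.
\end{itemize}
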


\section{Simulation Results and Experimental Validation}
\label{sec:simulation}

To empirically validate the correctness and convergence properties of the proposed gathering algorithm under defected visibility models, we implemented a discrete-time simulator in Python. These configurations are uniformly randomly generated for simulation purposes. Robots operate under the standard Look–Compute–Move cycles. At each round, the defected view is generated according to the corresponding adversarial visibility model. Unless otherwise stated, robots are placed in the Euclidean plane with arbitrary initial positions. The execution terminates when all robots occupy the same position. To quantify gathering, we measure the spatial span of the configuration over time. In particular, for a configuration at round $t$, we define:


\subsection{$(4,2)$ Adversarial Defected View Model}
\label{subsec:42model}

We first evaluate the algorithm under the $(4,2)$ adversarial defected view model.

\noindent \textbf{Collinear Initial Configuration:}
Figure~\ref{fig:areavssteps}(a) shows that under the FSYNC model ($N=4$, $K=2$), both the initial configuration and the observed robot subsets are generated pseudorandomly (with a fixed seed for reproducibility). Since all robots are initially collinear, no vertical alignment is required, and the vertical span remains zero throughout. The horizontal span decreases monotonically and almost linearly across synchronous rounds, indicating uniform global contraction. Even under non-rigid motion and bounded visibility, each round guarantees consistent progress, demonstrating stable and predictable convergence under full synchrony.

\noindent \textbf{Non-Collinear Initial Configuration:}
Figure~\ref{fig:areavssteps}(b) illustrates the evolution of the convex hull area for $N=4$ and $K=2$ under the FSYNC model, where the robots are initially placed pseudorandomly in a non-collinear configuration. The observed subsets for each robot are also generated pseudorandomly (with a fixed seed for reproducibility). The area of the convex hull decreases monotonically across synchronous rounds, indicating uniform global contraction of the configuration(an example of area contraction is shown in Figure~\ref{fig:noncollinearrounds}(a), (b), (c)). The smooth and progressively flattening decay suggests consistent collective movement toward gathering, with faster reduction in early rounds due to larger initial dispersion and slower contraction near gathering point as robots approach proximity.
\begin{figure*}[htb]
\centering

\begin{minipage}[t]{0.48\textwidth}
    \centering
    \includegraphics[width=0.95\linewidth]{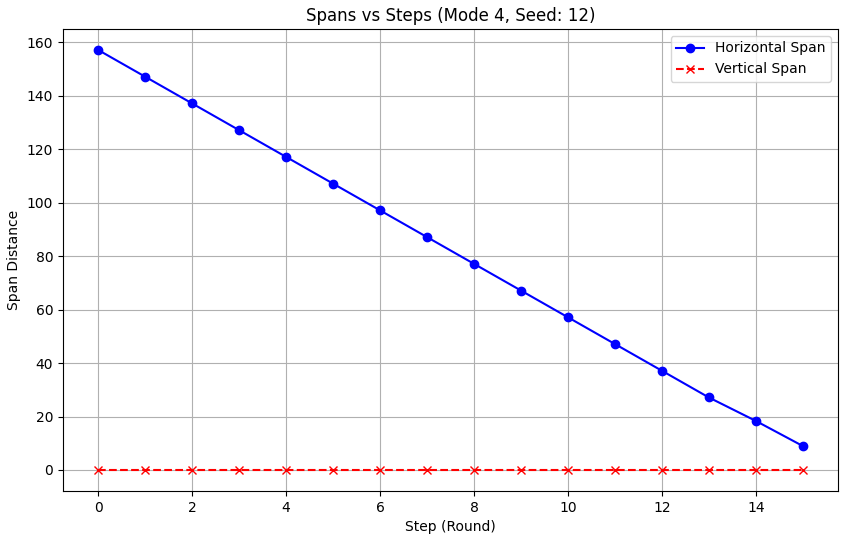}
    
    \vspace{2mm}
    (a)
\end{minipage}
\hfill
\begin{minipage}[t]{0.48\textwidth}
    \centering
    \includegraphics[width=0.95\linewidth]{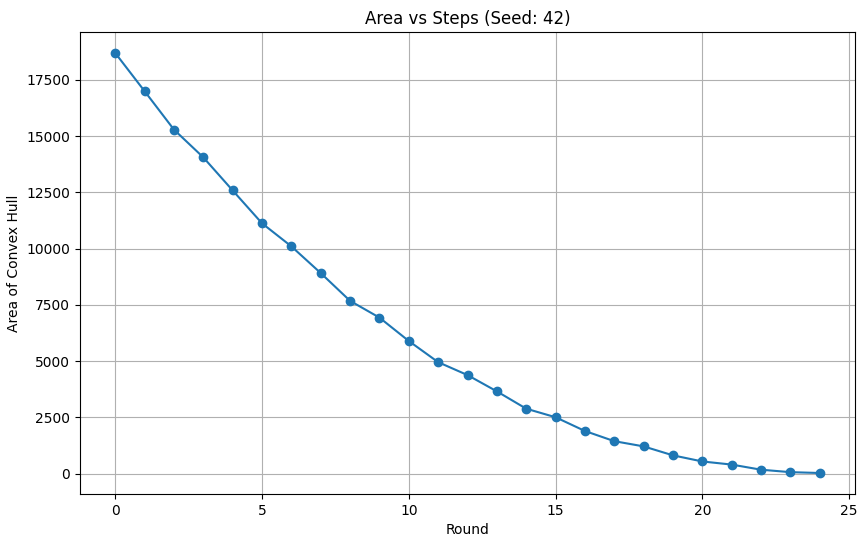}
    
    \vspace{2mm}
    (b)
\end{minipage}

\caption{(a) Illustrates that when robots are collinear, then horizontal width decreases monotonically and vertical span remains zero throughout the execution. (b) Shows that the area formed by the convex hull of non-collinear positions decreases as the number of rounds increases.}
\label{fig:areavssteps}
\end{figure*}
\subsection{$(N,K)$ Defected View Model}
\label{subsec:NKmodel}
The theoretical correctness of Algorithm~\ref{alg:movement} is established through rigorous mathematical analysis covering all admissible asynchronous executions and adversarial visibility defects. The simulation study, therefore, does not aim to re-validate correctness, but rather to provide intuitive insight into the algorithm’s geometric behavior, local decision-making process, and empirical scaling properties.

The Python implementation is organized around two principal classes: \texttt{Robot} and \texttt{GatheringSystem}. Each robot maintains its Cartesian position, trajectory history, visible set, and computed destination. The motion strictly follows the non-rigid model: when activated, a robot moves toward its destination by at most a fixed distance $\delta$, ensuring guaranteed minimum progress while allowing partial movement, consistent with the theoretical assumptions.

The global execution is handled by the \texttt{GatheringSystem}. At each asynchronous step, a nonempty subset of robots is selected using a seeded \emph{pseudorandom} scheduler to approximate asynchrony in a reproducible manner. For every activated robot, a visibility bound $K \in [1, N-2]$ is chosen pseudorandomly, and at most $K$ robots (including itself) are exposed in its view. The geometric rules of the algorithm are then applied exactly as specified: visible robots are partitioned into horizontal layers; robots on the topmost layer either wait, remain stationary if internally positioned, or compute an equilateral-triangle apex if extremal; robots below the topmost layer move along $60^\circ$ Go-lines toward appropriate intersections (an example of such movement is shown in the Figure~\ref{fig:movementNK}). Thus, the simulation directly encodes all geometric cases under adversarially varying visibility.

Convergence is evaluated using two global measures: horizontal width $(\max x - \min x)$ and vertical span $(\max y - \min y)$. Convergence is declared once these quantities fall below a fixed threshold (here $10^{-1}$). For each swarm size $N \in \{5,7,\dots,49\}$, multiple independent runs with different pseudorandom seeds are executed, and the average number of asynchronous steps required for convergence is plotted.

The resulting graphs (as illustrated in Figure~\ref{fig:Knotfixed}) show a clear and consistent pattern. The horizontal convergence curve is obtained by averaging over $5$ independent pseudorandom runs for each swarm size, whereas the vertical convergence curve is averaged over $17$ independent runs. This averaging mitigates fluctuations caused by the pseudorandom asynchronous scheduler and adversarial visibility selection, thereby revealing the intrinsic geometric behavior of the algorithm rather than run-specific variations.

Horizontal convergence remains tightly bounded (approximately $500$--$650$ steps) and exhibits only mild growth with increasing $N$. This behavior reflects the strength of the extremal-line invariant: only extremal robots actively reduce the global width, while internal robots do not contribute to its expansion. As a result, horizontal contraction progresses in a controlled and nearly size-insensitive manner. The relatively small dispersion across runs further indicates that horizontal gathering is structurally stable and weakly affected by adversarial visibility variations.

In contrast, vertical convergence shows a more pronounced dependence on swarm size, reaching several thousand steps for larger values of $N$. This scaling behavior is consistent with the layered geometric contraction mechanism, where vertical reduction occurs through successive elimination of upper horizontal layers. Larger swarms generally introduce greater initial vertical dispersion and more intermediate layers, which increases the number of contraction phases required. Nevertheless, the averaged curve remains smooth and monotonic in expectation, with no oscillatory or divergent behavior observed. This confirms that the increased convergence time is a consequence of geometric scaling rather than instability in the asynchronous execution.

It is important to emphasize that the simulation does not attempt to exhaustively enumerate all asynchronous executions; such guarantees are already provided by the theoretical proofs. Instead, the simulation offers a visual and quantitative illustration of how local geometric rules collectively produce global contraction. The consistent finite-time convergence across all tested swarm sizes provides intuitive confirmation of the algorithm’s robustness and scalability under adversarial $(N,K)$-defected visibility.
\begin{figure*}[htb]
\centering

\begin{minipage}{0.32\textwidth}
    \centering
    \includegraphics[width=\linewidth]{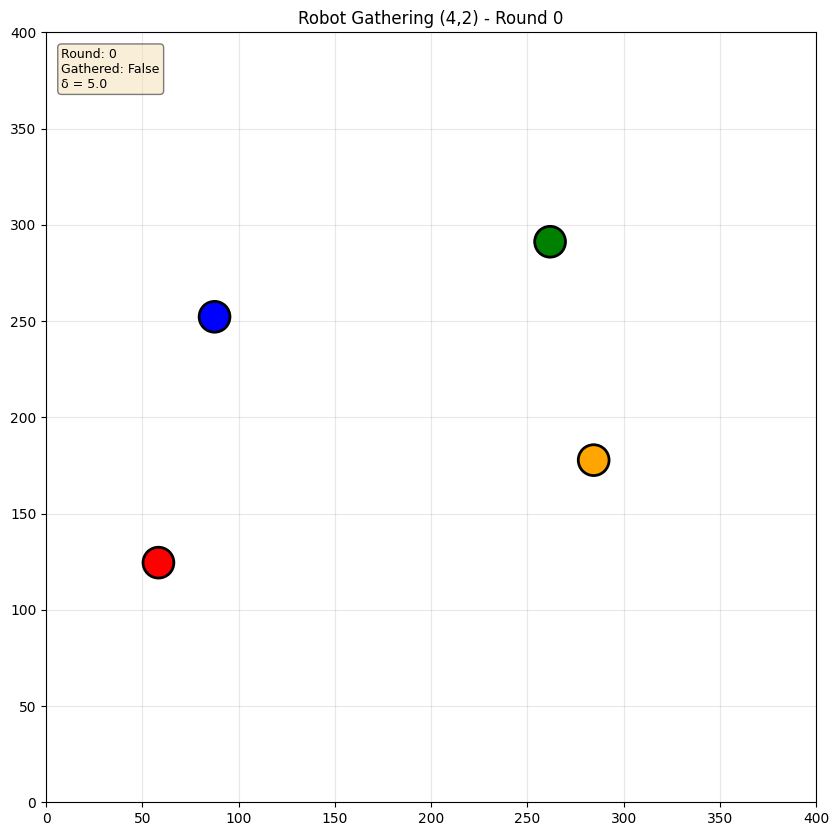}
    
    \vspace{2mm}
    (a)
\end{minipage}
\hfill
\begin{minipage}{0.32\textwidth}
    \centering
    \includegraphics[width=\linewidth]{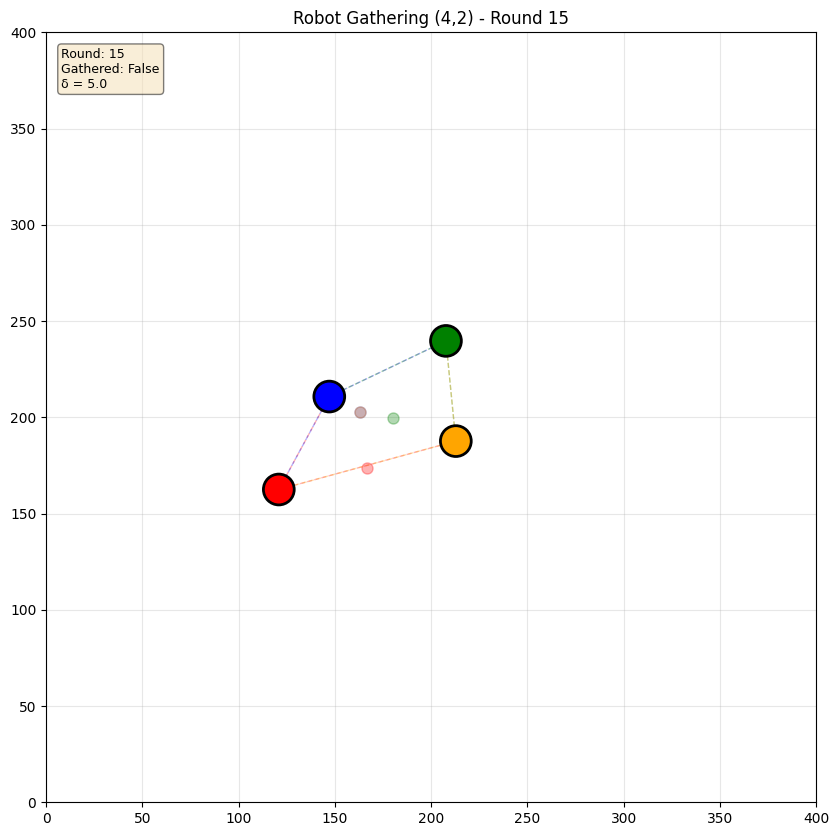}
    
    \vspace{2mm}
    (b)
\end{minipage}
\hfill
\begin{minipage}{0.32\textwidth}
    \centering
    \includegraphics[width=\linewidth]{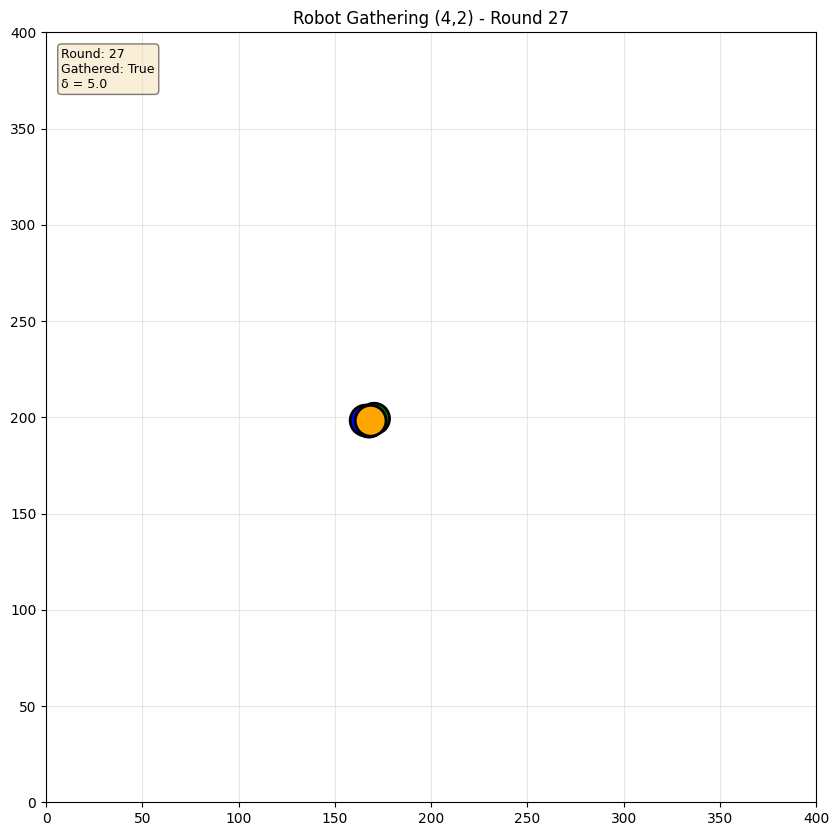}
    
    \vspace{2mm}
    (c)
\end{minipage}

\caption{Evolution of a non-collinear configuration under the adversarial $(4,2)$ defected view model: (a) Round~0, (b) Round~15, and (c) Round~27.}
\label{fig:noncollinearrounds}
\end{figure*}

\begin{figure*}[htb]
\centering

\begin{minipage}{0.45\textwidth}
    \centering
    \includegraphics[scale=0.35]{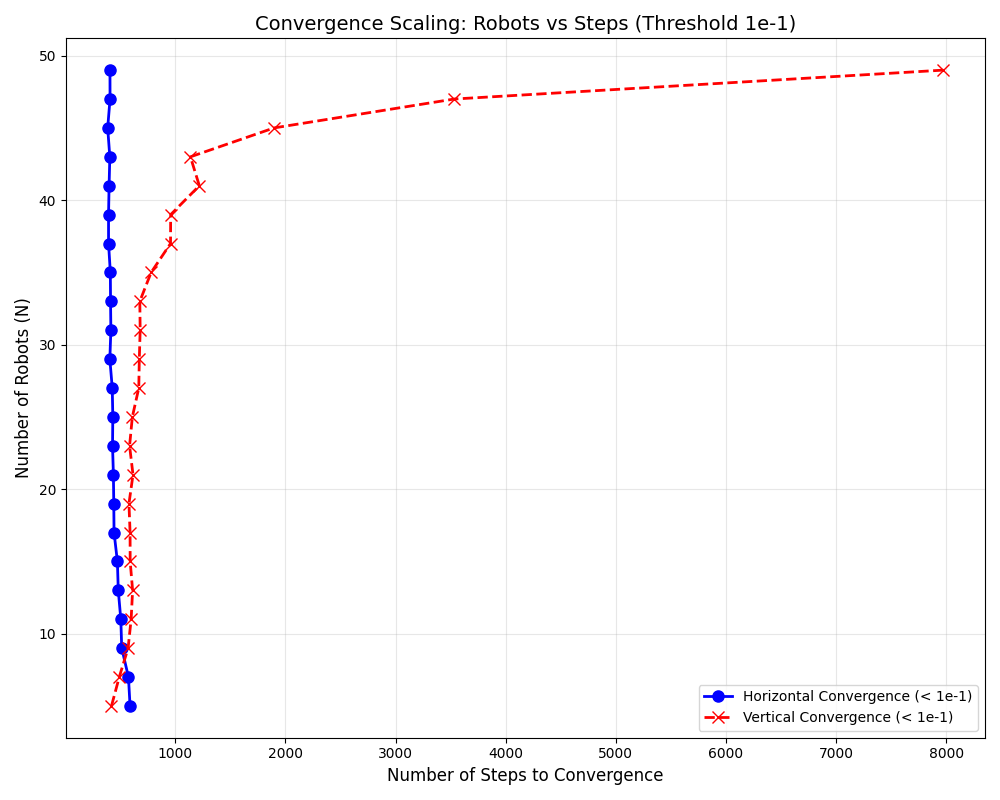}
    
    \vspace{2mm}
    (a) 5-run
\end{minipage}
\hfill
\begin{minipage}{0.45\textwidth}
    \centering
    \includegraphics[scale=0.35]{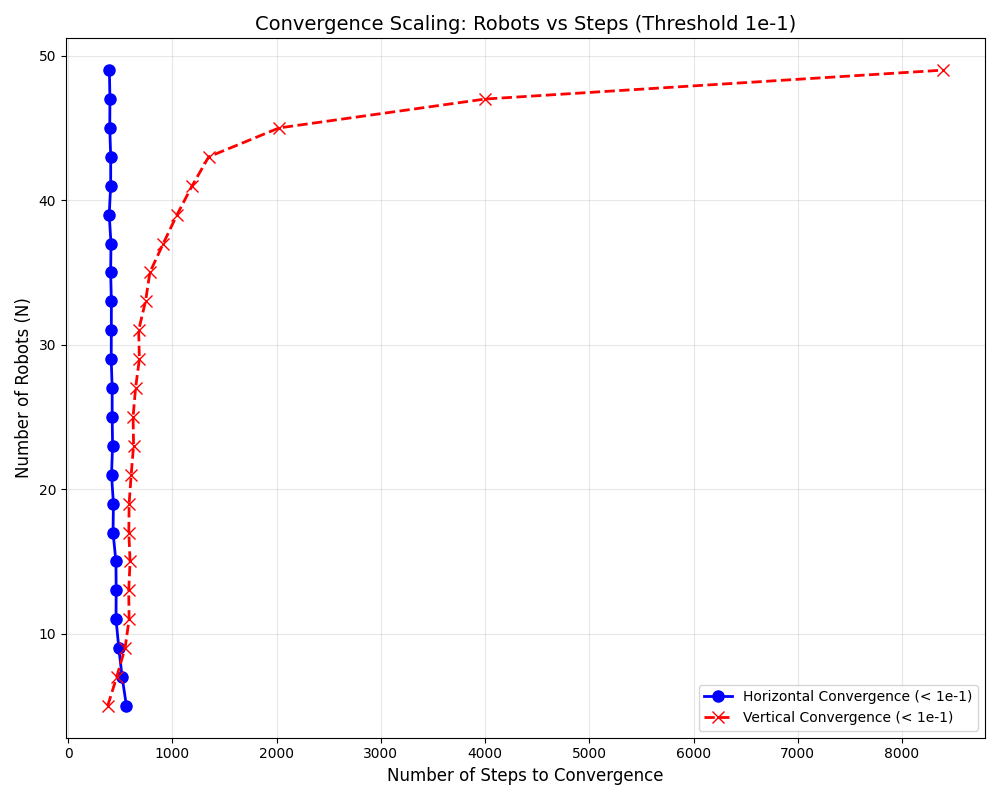}
    
    \vspace{2mm}
    (b) 17-run
\end{minipage}

\caption{The 17-run average yields a smoother and more consistent decay of vertical span under adversarial defect sequences, and both the graph shows stable horizontal convergence.}
\label{fig:Knotfixed}
\end{figure*}

\begin{figure*}[!htbp]
\centering

\begin{minipage}{0.25\textwidth}
    \centering
    \includegraphics[scale=0.34]{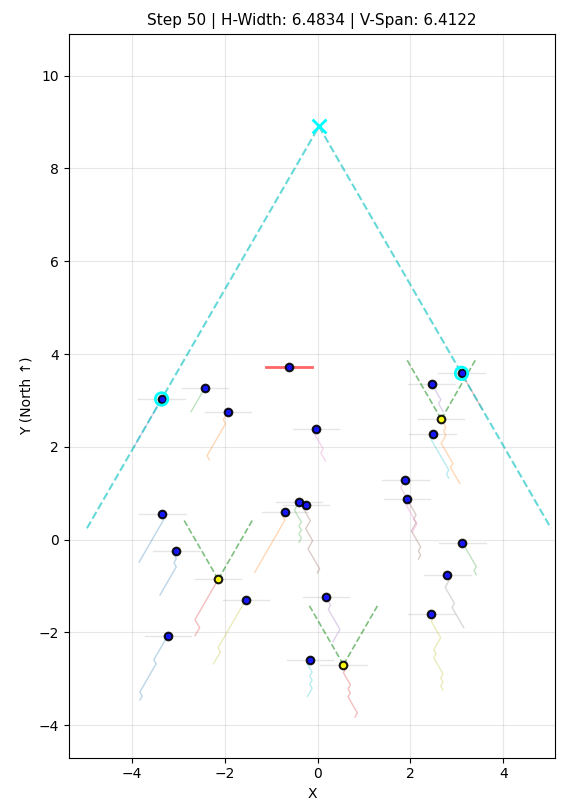}
    
    \vspace{2mm}
    (a)
\end{minipage}
\hfill
\begin{minipage}{0.3\textwidth}
    \centering
    \includegraphics[scale=0.34]{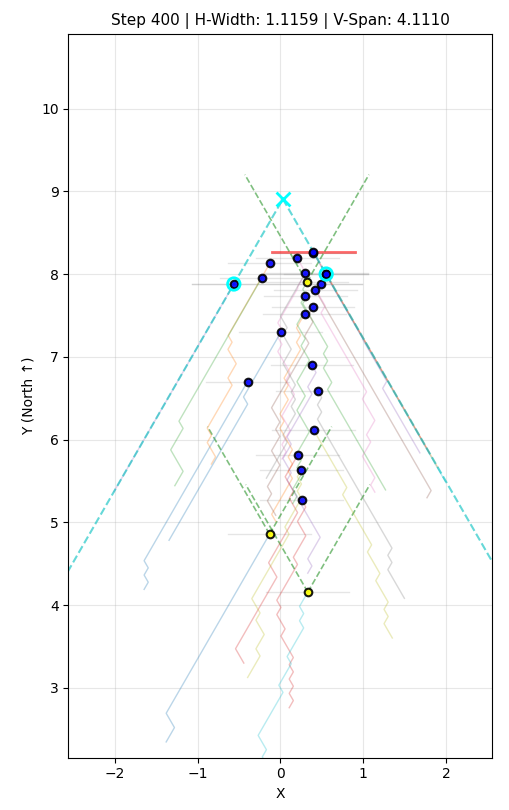}
    
    \vspace{2mm}
    (b)
\end{minipage}
\hfill
\begin{minipage}{0.32\textwidth}
    \centering
    \includegraphics[scale=0.34]{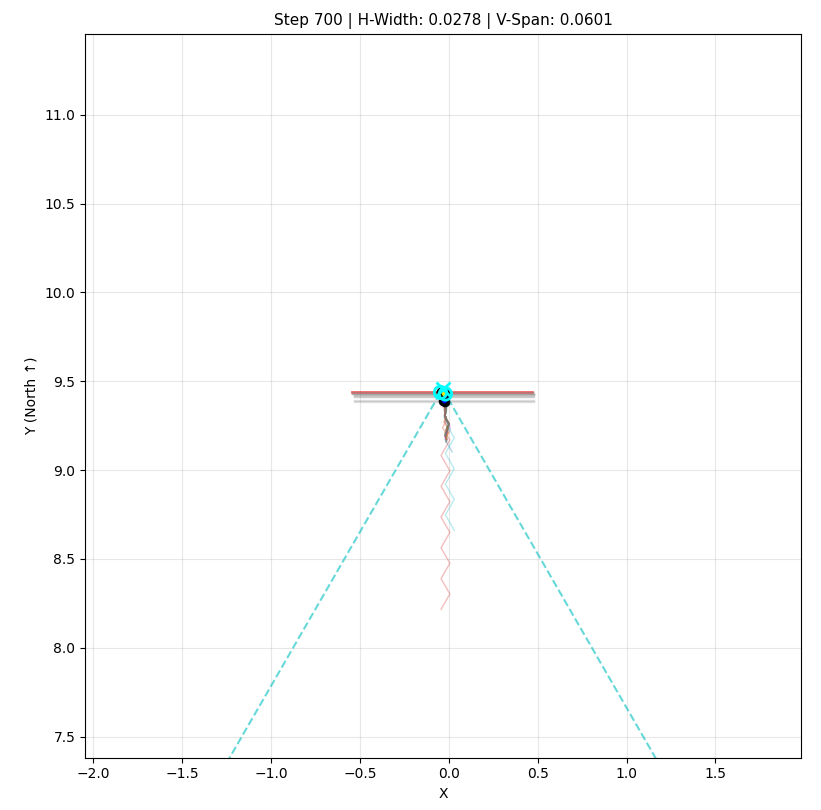}
     \vspace{2mm}
    (c)
\end{minipage}

\caption{Blue lines denote the extremal lines whose intersection defines the gathering point. Robots move toward the computed destination, resulting in contraction of the horizontal span. The experiment corresponds to $N=25$ with dynamically varying observation parameter $K \in [1,23]$.}
\label{fig:movementNK}
\end{figure*}

  \section{Conclusion}

In this paper, we have studied the gathering problem for autonomous robot swarms operating under the recently introduced \emph{defected view model}, where robots may observe only a limited and adversarially selected subset of other robots. This model captures severe sensing restrictions and extends classical visibility assumptions commonly adopted in the literature.

We first addressed an open problem concerning the gathering of four fully synchronous robots under the adversarial $(4,2)$ defected view model cited in ~\cite{kim2023gathering}. For this setting, we presented a distributed Algorithm that guarantees gathering in finite time. The proposed algorithm operates without assuming any additional capabilities such as multiplicity detection, memory, or agreement on the local coordinate system, thereby establishing that gathering is achievable under minimal assumptions even in the presence of adversarial observation.

Then, we have investigated the gathering problem for \textbf{asynchronous} robots under the adversarial $(N,K)$ defected view model for \emph{all} values $1 \le K \le N-2$. In particular, we presented and analyzed a finite-time distributed Algorithm that works even in the extreme situations where a robot may miss $N-2$ out of the other $N-1$ robots and can observe only a single other robot at any activation. The proposed Algorithm possesses a finite gathering assuming that robots share a common understanding of only \textbf{One-axis} of their local coordinate systems. This result substantially strengthens existing solvability bounds and demonstrates the surprising robustness of gathering under asynchronous execution and highly adversarial sensing conditions. Unlike most existing works on the defected view model, our Algorithms operate under the \textbf{non-rigid movement model}. We have strengthened the robustness and practical relevance of our results.

Several challenging directions remain for future research. One natural extension is to relax the assumption of axis agreement and investigate the feasibility of gathering under the defected view model without any agreement on the local coordinate systems. Another important direction is to study similar problems in higher-dimensional Euclidean spaces, where geometric obstructions and visibility constraints become more complex. Additionally, exploring gathering under weaker robot capabilities or alternative obstruction and visibility assumptions may further deepen the understanding of the fundamental limits of coordination in defected observation models.


\bibliographystyle{plain} 
\bibliography{bibliography.bib}

@String{Computing = "Computing" }

@String{Computer = "{IEEE} Computer" }

@String{Springer = "Springer-Verlag" }

@article{di2025gathering,
  title={Gathering on a circle with limited visibility by anonymous oblivious robots},
  author={Di Luna, Giuseppe Antonio and Uehara, Ryuhei and Viglietta, Giovanni and Yamauchi, Yukiko},
  journal={Theoretical Computer Science},
  volume={1025},
  pages={114974},
  year={2025},
  publisher={Elsevier}
}

@article{Bhagat201650,
  author    = {Subhash Bhagat and Sruti Gan Chaudhuri and Krishnendu Mukhopadhyaya.},
  title     = {Fault-tolerant gathering of asynchronous oblivious mobile robots under one-axis agreement},
  journal   = {J. Discrete Algorithms},
  volume    = {36},
  pages     = {pages 50--62},
  year      = {2016},
 issn = "1570-8667",
  url       = {http://www.sciencedirect.com/science/article/pii/S1570866715001070},
  doi       = {http://dx.doi.org/10.1016/j.jda.2015.10.005},
  bibsource = {dblp computer science bibliography, http://dblp.org}
}

@inproceedings{BhagatM17,
  author    = {S. Bhagat and
               K. Mukhopadhyaya},
  title     = {Fault-tolerant Gathering of Semi-synchronous Robots},
  booktitle = {Proc. the 18th International Conference on Distributed Computing
               and Networking (ICDCN-2017)},
  pages     = {6},
  year      = {2017}
}

@inproceedings{BhagatCM18,
  author    = {S. Bhagat and
               S. Gan Chaudhuri and
               K. Mukhopadhyaya},
  title     = {Gathering of Opaque Robots in 3D Space},
  booktitle = {Proc. the 19th International Conference on Distributed Computing and Networking, (ICDCN-2018)},
  pages     = {2:1--2:10},
  year      = {2018}
}

@article{agmon2006fault,
  title={Fault-tolerant gathering algorithms for autonomous mobile robots},
  author={Noa Agmon and David Peleg},
  journal={SIAM Journal on Computing},
  volume={36},
  number={1},
  pages={ pages 56--82},
  year={2006},
  publisher={SIAM}
}

@INPROCEEDINGS{Quentin2014,
author={Quentin Bramas and Sébastien Tixeuil.},
booktitle={Proc. International Colloquium on Structural Information and Communication Complexity (SIROCCO)}, 
title={Wait-free gathering without chirality},
year={2014},
month={},
volume={},
number={},
pages={313-327},
doi={},
ISSN={},}

@inproceedings{bhagat2017gathering,
  title={Gathering Asynchronous Robots in the Presence of Obstacles},
  author={Bhagat, Subhash and Mukhopadhyaya, Krishnendu},
  booktitle={11th International Conference and Workshop on Algorithms and Computation (WALCOM)},
  pages={279--291},
  year={2017},
}

@article{CiceroneSN18,
  author    = {Serafino Cicerone and
               Gabriele Di Stefano and
               Alfredo Navarra},
  title     = {Gathering of robots on meeting-points: feasibility and optimal resolution algorithms},
  journal   = {Distributed Comput.},
  volume    = {31},
  number    = {1},
  pages     = {1--50},
  year      = {2018}
}

@inproceedings{Viglietta13,
  author    = {G. Viglietta},
  title     = {Rendezvous of Two Robots with Visible Bits},
  booktitle = {9th International Symposium on Algorithms
               and Experiments for Sensor Systems, Wireless Networks and Distributed Robotics, (ALGOSENSORS-2013)},
  series    = {},
  volume    = {},
  pages     = {291-306},
  publisher = {},
  year      = {2013},
  url       = {},
  doi       = {},
  bibsource = {}
}

@INPROCEEDINGS{11318926,
  author={Aliberti, Andrea and Kim, Yonghwan and Katayama, Yoshiaki},
  booktitle={2025 Thirteenth International Symposium on Computing and Networking Workshops (CANDARW)}, 
  title={Limits of Fault-Tolerant Observation: Impossibility of Gathering in the (4,2)-Defected View Model}, 
  year={2025},
  volume={},
  number={},
  pages={121-127},
  doi={10.1109/CANDARW68385.2025.00029},
  ISSN={2832-1324},
  month={Nov},}

@ARTICLE{Suzuki1999,
    author = {Ichiro Suzuki  and Masafumi Yamashita.},
    title = {Distributed anonymous mobile robots: Formation of geometric patterns},
    journal = {SIAM Journal on Computing},
    year = {1999},
    volume = {28},
    pages = {pages 1347-1363}
}

@incollection{Cohen2004,
year={2004},
isbn={978-3-540-22230-9},
booktitle={Structural Information and Communication Complexity},
volume={3104},
series={Lecture Notes in Computer Science},
doi={10.1007/978-3-540-27796-5_8},
title={Robot Convergence via Center-of-Gravity Algorithms},
url={http://dx.doi.org/10.1007/978-3-540-27796-5_8},
publisher={Springer Berlin Heidelberg},
author={Cohen, R. and Peleg, D.},
pages={79-88}
}

@article{Flocchini2008,
 author = {Flocchini, P. and Prencipe, G. and Santoro, N. and Widmayer, P.},
 title = {Arbitrary pattern formation by asynchronous, anonymous, oblivious robots},
 journal = "Theoretical Computer Science",
 issue_date = {2008},
 volume = {407},
 number = {1-3},
  
 year = {2008},
 issn = {0304-3975},
 pages = {pages 412--447},
 numpages = {36},
 url = {http://dx.doi.org/10.1016/j.tcs.2008.07.026},
 doi = {10.1016/j.tcs.2008.07.026},
 acmid = {1450494},
 publisher = {Elsevier Science Publishers Ltd.},
 address = {Essex, UK},
}

@article{Prencipe2007,
title = "Impossibility of gathering by a set of autonomous mobile robots",
journal = "Theoretical Computer Science",
volume = "384",
number = "2 - 3",
pages = "pages 222--231",
year = "2007",
issn = "0304-3975",
doi = "10.1016/j.tcs.2007.04.023",
url = "http://www.sciencedirect.com/science/article/pii/S0304397507003349",
author = "Giuseppe Prencipe."
}

@INPROCEEDINGS{flocci2014,
    author = {Flocchini, P. and Prencipe, G. and Santoro, N. and Viglietta, G.},
    title = { Distributed Computing by Mobile Robots: Solving the Uniform Circle Formation Problem},
    booktitle = {The 18th International Conference on Principles of Distributed Systems (OPODIS 2014)},
    year = {2014},
    pages = {217-232},
    publisher = { }
}

@incollection{Fujinaga2010,
year={2010},
isbn={978-3-642-17652-4},
booktitle={Principles of Distributed Systems},
volume={6490},
series={Lecture Notes in Computer Science},
doi={10.1007/978-3-642-17653-1_1},
title={Pattern Formation through Optimum Matching by Oblivious CORDA Robots},
url={http://dx.doi.org/10.1007/978-3-642-17653-1_1},
publisher={Springer Berlin Heidelberg},
author={Fujinaga, N. and Ono, H. and Kijima, S. and Yamashita, M.},
pages={1-15}
}

@article{Cieliebak2012,
author = "Mark Cieliebak and Paola Flocchini and  Giuseppe Prencipe and Nicola Santoro",
title = {Distributed Computing by Mobile Robots: Gathering},
journal = {SIAM Journal on Computing},
volume = {41},
number = {4},
pages = {pages 829-879},
year = {2012},
doi = {10.1137/100796534},
URL = {http://epubs.siam.org/doi/abs/10.1137/100796534},
eprint = {http://epubs.siam.org/doi/pdf/10.1137/100796534}
}

@inproceedings{Agathangelou2012,
 author = "Chrysovalandis Agathangelou and Chryssis Georgiou and Marios Mavronicolas",
 title = {A Distributed Algorithm for Gathering Many Fat Mobile Robots in the Plane},
 booktitle = {Proc. ACM Symposium on Principles of Distributed Computing (PODC)},
 isbn = {978-1-4503-2065-8},
 location = {Montr\&\#233;al, Qu\&\#233;bec, Canada},
 pages = {250--259},
 numpages = {10},
 year = {2013},
 url = {http://doi.acm.org/10.1145/2484239.2484266},
 doi = {10.1145/2484239.2484266},
 acmid = {2484266},
}

@book{Santoro2012,
  author    = "Paola Flocchini and Giuseppe Prencipe  and Nicola Santoro",
  title     = {Distributed Computing by Oblivious Mobile Robots},
  publisher = {Morgan {\&} Claypool Publishers},
  series    = {Synthesis Lectures on Distributed Computing Theory},
  year      = {2012},
  ee        = {http://dx.doi.org/10.2200/S00440ED1V01Y201208DCT010},
  bibsource = {DBLP, http://dblp.uni-trier.de}
}

@inproceedings{FlocchiniKKSY19,
  author    = {Paola Flocchini and
               Ryan Killick and
               Evangelos Kranakis and
               Nicola Santoro and
               Masafumi Yamashita},
  editor    = {Pinyan Lu and
               Guochuan Zhang},
  title     = {Gathering and Election by Mobile Robots in a Continuous Cycle},
  booktitle = {30th International Symposium on Algorithms and Computation, {ISAAC}},
  volume    = {149},
  pages     = {8:1--8:19},
  year      = {2019},

}

@article{Izumi2013,
 author = {Izumi, T. and Izumi, T. and Kamei, S. and Ooshita, F.},
 title = {Feasibility of Polynomial-Time Randomized Gathering for Oblivious Mobile Robots},
 journal = {IEEE Trans. Parallel Distrib. Syst.},
 issue_date = {April 2013},
 volume = {24},
 number = {4},
 month = apr,
 year = {2013},
 issn = {1045-9219},
 pages = {716--723},
 numpages = {8},
 url = {http://dx.doi.org/10.1109/TPDS.2012.212},
 doi = {10.1109/TPDS.2012.212},
 acmid = {2478835},
 publisher = {IEEE Press},
 address = {Piscataway, NJ, USA},
}

@article{PattanayakMRM19,
  author    = {D. Pattanayak and
               K. Mondal and
               H. Ramesh and
               P. S. Mandal},
  title     = {Gathering of mobile robots with weak multiplicity detection in presence of crash-faults},
  journal   = {Journal of Parallel Distributed Computing},
  volume    = {123},
  pages     = {145--155},
  year      = {2019},
}

@article{czyzowicz2009gathering,
	title={Gathering few fat mobile robots in the plane},
	author={Jurek Czyzowicz and  Leszek Gasieniec and Andrzej Pelc.},
	journal={Theoretical Computer Science},
	volume={410},
	number={6},
	pages={pages 481--499},
	year={2009},
	publisher={Elsevier}
}

@article{GervasiP04,
  author    = {Vincenzo Gervasi and
               Giuseppe Prencipe},
  title     = {Coordination without communication: the case of the flocking problem},
  journal   = {Discret Applied Mathematics},
  volume    = {144},
  number    = {3},
  pages     = {324--344},
  year      = {2004}
}

@article{tomita2017plane,
  title        = {Plane Formation by Synchronous Mobile Robots without Chirality},
  author       = {Tomita, Yusuke and Yamauchi, Yoshiaki and Kijima, Shin-ichi and Yamashita, Masafumi},
  journal      = {arXiv preprint arXiv:1705.06521},
  year         = {2017}
}

@inproceedings{uehara2016plane,
  title        = {Plane Formation by Semi-Synchronous Robots in the Three-Dimensional Euclidean Space},
  author       = {Uehara, Tetsuya and Yamauchi, Yoshiaki and Kijima, Shin-ichi and Yamashita, Masafumi},
  booktitle    = {Proceedings of the International Symposium on Stabilization, Safety, and Security of Distributed Systems (SSS)},
  pages        = {383--398},
  year         = {2016}
}

@inproceedings{yamauchi2013pattern,
  title={Pattern formation by mobile robots with limited visibility},
  author={Yamauchi, Yukiko and Yamashita, Masafumi},
  booktitle={International Colloquium on Structural Information and Communication Complexity},
  pages={201--212},
  year={2013},
  organization={Springer}
}

@article{YamauchiUKY17,
  title        = {Plane Formation by Synchronous Mobile Robots in the Three-Dimensional Euclidean Space},
  author       = {Yamauchi, Yoshiaki and Uehara, Tetsuya and Kijima, Shin-ichi and Yamashita, Masafumi},
  journal      = {Journal of the ACM},
  volume       = {64},
  number       = {3},
  pages        = {16:1--16:43},
  year         = {2017}
}

@article{Flocchini2005,
title = "Gathering of asynchronous robots with limited visibility ",
journal = "Theoretical Computer Science ",
volume = "337",
number = "1-3",
pages = "147 - 168",
year = "2005",
note = "",
issn = "0304-3975",
doi = "http://dx.doi.org/10.1016/j.tcs.2005.01.001",
url = "http://www.sciencedirect.com/science/article/pii/S0304397505000149",

author = "Flocchini, P. and Prencipe, G. and Santoro, N. and Widmayer, P."

}

@article{DefagoPP20,
  author    = {Xavier D{\'{e}}fago and
               Maria Potop{-}Butucaru and
               Philippe Raipin Parv{\'{e}}dy},
  title     = {Self-stabilizing gathering of mobile robots under crash or Byzantine faults},
  journal   = {Distributed Computing},
  volume    = {33},
  number    = {5},
  pages     = {393--421},
  year      = {2020}
}

@article{kim2022gathering,
  title={Gathering Despite Defected View},
  author={Kim, Yonghwan and Shibata, Masahiro and Sudo, Yuichi and Nakamura, Junya and Katayama, Yoshiaki and Masuzawa, Toshimitsu},
  journal={arXiv preprint arXiv:2208.08159},
  year={2022}
}

@inproceedings{kim2023gathering,
  title={Gathering of mobile robots with defected views},
  author={Kim, Yonghwan and Shibata, Masahiro and Sudo, Yuichi and Nakamura, Junya and Katayama, Yoshiaki and Masuzawa, Toshimitsu},
  booktitle={26th International Conference on Principles of Distributed Systems (OPODIS 2022)},
  pages={14--1},
  year={2023},
  organization={Schloss Dagstuhl--Leibniz-Zentrum f{\"u}r Informatik}
}

@INPROCEEDINGS{8554579,
  author={Chauhuri, Sruti Gan},
  booktitle={2018 International Conference on Advances in Computing, Communications and Informatics (ICACCI)}, 
  title={Gathering of Autonomous Mobile Robots Under Limited Sensing Ranges}, 
  year={2018},
  volume={},
  number={},
  pages={909-914},
  doi={10.1109/ICACCI.2018.8554579}}

@inproceedings{chatterjee2015gathering,
  title={Gathering asynchronous swarm robots under nonuniform limited visibility},
  author={Chatterjee, Avik and Gan Chaudhuri, Sruti and Mukhopadhyaya, Krishnendu},
  booktitle={International Conference on Distributed Computing and Internet Technology},
  pages={174--180},
  year={2015},
  organization={Springer}
}

@inproceedings{braun2020local,
  title={Local gathering of mobile robots in three dimensions},
  author={Braun, Michael and Castenow, Jannik and Meyer auf der Heide, Friedhelm},
  booktitle={International Colloquium on Structural Information and Communication Complexity},
  pages={63--79},
  year={2020},
  organization={Springer}
}

@book{santorobook2,
  editor    = {Paola Flocchini and
               Giuseppe Prencipe and
               Nicola Santoro},
  title     = {Distributed Computing by Mobile Entities, Current Research in Moving and Computing},
   publisher = {Springer},
  year      = {2019},
  }

\end{document}